\newtheorem{theorem}{Theorem}[section]
\newtheorem{corollary}{Corollary}
\newtheorem{lemma}[theorem]{Lemma}
\theoremstyle{definition}
\newtheorem{definition}[theorem]{Definition}
\newtheorem{remark}{Remark}
\title[A Collision-Avoiding flocking system]{Mean-Field Limit for a Collision-Avoiding Flocking System and the Time-Asymptotic Flocking Dynamics for the Kinetic Equation}
\author{Rong Yang}
\address{Department of Mathematical Sciences, Tsinghua University,
Beijing, 100084, People's Republic of China}
\email{ysihan2010@163.com}
\author{Li Chen}
\address{Department of Mathematical Sciences, Tsinghua University,
Beijing, 100084, People's Republic of China}
\email{lchen@math.tsinghua.edu.cn}
\thanks{This work is supported by the National
Natural Science Foundation of China (NSFC), No. 11271218. }
\date{}
\begin{document}
\maketitle

\begin{abstract}
A Collision-Avoiding flocking particle system proposed in \cite{FJ} is studied in this paper. The global wellposedness of its corresponding Vlasov-type kinetic equation is proved. As a corollary of the global stability result, the mean field limit of the particle system is obtained. Furthermore, the time-asymptotic flocking behavior of the solution to the kinetic equation is also derived. The technics used for local wellposedness and stability follow from similar ideas to those have been used in \cite{WELL,N77,D79}. While in order to extend the local result globally, the main contribution here is to generate a series of new estimates for this Vlasov type equation, which imply that the growing of the characteristics can be controlled globally. Further estimates also show the long time flocking phenomena.
\end{abstract}


\section{Introduction}

\subsection{Background}
Collective self-driven synchronized motion in multi-agent interactions appears in many applications in biology \cite{BB,D1,D2,D3,JK,JS,CM1,CM2, V1}, ecology, mobile network \cite{MA, DH, HL}, control theory \cite{ID,LP}, sociology and economics \cite{AC,AD}. In the last few years,  two strategies have been used to describe these phenomena in applied mathematics literature: particle dynamics \cite{FS1,FS2,MR, HL} and continuum models for mesoscopic or macroscopic quantities \cite{YL,SD,SE,CM1,CM2}. For instance, in the works of Cucker and Smale \cite{FS1,FS2}, a particle model for the flocking of birds and the flocking behavior with spatial communication rate between autonomous agents were postulated. In the works by Ha and Tadmor \cite{SE}, Carrillo {\it et al.} \cite{JA}, a statistical description of the interacting agent system based on mesoscopic models by means of kinetic equations were studied. An in depth study of kinetic models is very important in understanding the phenomena since this class of models play a role of bridge between the particle models and the macroscopic models.

In this paper, we will focus on a general collision-avoiding flocking system proposed by Felipe Cucker and Jiu-Gang Dong \cite{FJ}. It is a new simple dynamical system which describes the emergency of flocking, especially the behavior of collision-avoidance. It depends on three given functions, the repelling forces $F$ (responsible of avoiding collisions), the coupling forces $G$ (steering towards alignment) and the weighting for the strength of these couplings $\Phi$ (interaction rate). More precisely, the collision-avoiding particle system is given by
\begin{equation}\label{ode1equs}
  \left\{\begin{array}{l}
  x_i'(t)=v_i(t),\\
  v_i'(t)=\dfrac{1}{N}\sum\limits_{j=1}^N \Phi(|x_i-x_j|)G(v_j-v_i)+\dfrac{\Lambda(v)^{2\alpha-1}}{N}\sum\limits_{j=1}^NF(|x_i-x_j|^2)(x_i-x_j),
  \end{array}\right.
\end{equation}
where
$\Lambda(v)=\frac{1}{N}(\sum\limits_{i>j}(|v_i-v_j|^2))^{\frac{1}{2}}$ is the alignment measure and $1\leq\alpha<3/2$. In the case of $G(v)=v$ and $F=0$, the model goes back to Cucker-Smale model in \cite{FS1,FS2}.

The flocking and collision-avoiding results to the complex system \eqref{ode1equs} for fixed number of particles were studied in \cite{FJ}. Especially the collision-avoiding result was obtained under a very restrictive assumption on $F$. However, the number of particles is very large in real application, which implies that the numerical simulation for the corresponding ODE system is difficult, or sometimes impossible. A way of understanding the large particle system is to find the mean-field limit of it. We will rigorously derive a kinetic description for the particle model by using particle method which has been widely used in the literature (for example, \cite{D79,N77,SJ,WELL}), instead of BBGKY hierarchy which has been used in \cite{SE} (for Cucker-Smale model) and study the long time flocking phenomenon of its solution. Moreover, we give an \emph{a priori} $L^\infty$ estimate on the solution of \eqref{kineticequs}, which helps the understanding of collision-avoiding. But the general collision-avoiding result for arbitrary time for the kinetic equations is still open. We expect to have an $L^\infty$ estimate for the particle density $\rho(t,x)=\int_{\mathbb{R}^d}f(t,x,dv)$ to describe no aggregation of particles in time. This will be one of the future topics to study.

Now, we start with presenting a well-known approach to the wellposedness (existence, uniqueness and stability) of the kinetic model, where some basic knowledge of optimal transport theory \cite{TR} will be used. This method was used in \cite{WELL} aiming to give a global wellposedness result for a general system. However, there were no detailed estimates on growing of the characteristics so that one is not so sure whether the local existence could be extended globally in time, see remark \ref{remonref3}. In this article, we give a clear estimate on the compact support of the solution for any given time $t$ for this specific model, which is the main contribution here.

Let $f=f(t,x,v)$ be the one-particle distribution of such particles positioned at $(t,x)\in\mathbb{R}_+\times\mathbb{R}^d$ with a velocity $ v\in\mathbb{R}^d$. The corresponding kinetic equation of \eqref{ode1equs} in the sense of mean field limit is
\begin{equation}\label{kineticequs}
\partial_tf+v\cdot\nabla_x f +\nabla_v\cdot\Big(H_{[f]}(t,x,v)f\Big)=0,
\end{equation}
where
$\rho(t,x)=\displaystyle\int_{\mathbb{R}^{d}}f(t,x,dv)$ and
\begin{align*}
H_{[f]}(t,x,v)=&-\big[\Phi(|x|)G(v)\big]*f\\
&\quad +\Big(\int_{\mathbb{R}^{2d}}\!\!|v|^2f(dx,dv)-\big(\int_{\mathbb{R}^{2d}}\!\!vf(dx,dv)\big)^2\Big)^\frac{2\alpha-1}{2}[F(|x|^2)x]*\rho(t,x),
\end{align*}
where $\big[\Phi(|x|)G(v)\big]*f=\int_{\mathbb{R}^{2d}} \Phi(|x-y|)G(v-w)f(dy,dw)$, and $[F(|x|^2)x]*\rho(t,x)=\int_{\mathbb{R}^{d}}F(|x-y|^2)(x-y)\rho(t,dy)$.\\
The second goal of this paper is to prove that the kinetic model \eqref{kineticequs} exhibits time-asymptotic flocking behavior when the interaction rate has a uniformly non-negative lower bound.

\subsection{Structure of the paper}
The paper is organized as follows. In section 2 we give the motivation for the definition of measure valued solution of \eqref{kineticequs} (using push forward of measure)  and the main results of this paper. In section 3, the usual argument by using Banach fixed point theorem with Monge-Kantorovich-Rubinstein distance is applied in getting the local wellposedness result. Then after that, in section 4, we give a series of uniform estimates, including the boundedness of second moments in both position and velocity, in getting the control of the increasing of the compact support of the solution. In section 5, as a byproduct of stability, we show that how to get mean field limit of \eqref{ode1equs} by using particle method. The last section will be focused on the long time behavior of the solution, mainly on the flocking behavior. Namely, the kinetic velocity fluctuation decay in time and the kinetic velocity fluctuation is bounded in time.

\section{Definition of the solution and the main results}

\subsection{Definition of the measure valued solution}

In \eqref{kineticequs}, the total mass is preserved, which means that we can normalize the equation and consider only the solutions with total mass one. Therefore we can reduce ourselves to work with probability measures.

The idea of the definition of measure valued solution is the following. Notice that the associated characteristic system of \eqref{kineticequs} is
\begin{equation}\label{ode2equs}
  \left\{\begin{array}{l}
  \frac{d}{dt}X=V,\\
  \frac{d}{dt}V=H_{[f]}(t,X,V).
  \end{array}\right.
\end{equation}
If for any given initial data $(x,v)\in\mathbb{R}^{d}\times\mathbb{R}^{d}$, \eqref{ode2equs} has a solution $(X,V)$, we denote the flow of the equation \eqref{ode2equs} at time $t\in[0,T]$ (for some $0<T\leq +\infty$) by
\begin{eqnarray}\label{defmathcalT}
&{\mathcal{T}}^t_{H_{[f]}}: &\mathbb{R}^{d}\times\mathbb{R}^{d}\rightarrow\mathbb{R}^{d}\times\mathbb{R}^{d},\\
&&(t,x,v)\mapsto{\mathcal{T}}^t_{H_{[f]}}(x,v)=(X,V),
\end{eqnarray}
or equivalently the map $(t,x,v)\mapsto{\mathcal{T}}^t_{H_{[f]}}(x,v)$
is continuous, then we can define the push-forward of any given probability measure $f_0(x,v)$ by $f(t,x,v)={\mathcal{T}}^t_{H_{[f]}}\#f_0$. Here, push-forward means that for any Borel measurable set $\Omega \subset\mathbb{R}^{d}\times\mathbb{R}^{d}$, and $t\in[0,T]$
$$
\int_{\mathcal{T}^t_{H_{[f]}}(\Omega)}f_0(dx,dv)=\int_{\Omega}f(t,dx,dv),
$$
then equation
$$
f(t,x,v)={\mathcal{T}}^t_H{_{[f]}}\#f_0
$$
can be viewed as a weak formulation of kinetic equation \eqref{kineticequs}.
Actually, $f(t,x,v)={\mathcal{T}}^t_H{_{[f]}}\#f_0$ has another form in the sense of distribution, i.e. $\forall\varphi\in C_0^1([0, T]\times\mathbb{R}^{d}\times\mathbb{R}^{d})$,
\begin{equation}\label{weaksystem}
\frac{d}{dt}\langle f(t,x,v),\varphi\rangle=\langle f(t,x,v),\partial_t\varphi+v\cdot\nabla_x\varphi+H_{[f]}\cdot\nabla_v\varphi\rangle,
\end{equation}
where $\langle f(t,x,v),\varphi \rangle=\int_{\mathbb{R}^{2d}}\varphi(t,x,v)f(t,dx,dv)$.

Due to the typical structure of kinetic equation \eqref{kineticequs}, the following space is needed
\begin{align*}
\mathcal{P}^*_1(\mathbb{R}^{d}\times\mathbb{R}^{d})=&\Big\{f|~f \mbox{ is a probability measure in } \mathbb{R}^{d}\times\mathbb{R}^{d}, \mbox{ and }\\
&\int_{\mathbb{R}^{2d}}\big(|x|+|v|+|v|^2\big)f(dx,dv)
<+\infty\Big\}.
\end{align*}

\begin{definition}\label{defsolution}
Given an initial measure $f_0\in\mathcal{P}^*_1\big(\mathbb{R}^{d}\times\mathbb{R}^{d}\big)$,
for any $T\in(0,\infty]$, $f\in C\big([0, T];\mathcal{P}^*_1(\mathbb{R}^{d}\times\mathbb{R}^{d})\big)$ is called a measure valued solution of \eqref{kineticequs} with initial data $f_0$, if
$$
f(t,x,v)={\mathcal{T}}^t_{H_{[f]}}\#f_0.
$$
\end{definition}

\subsection{Main results}
We will give the following results on the existence, uniqueness, stability and the long time behavior of the measure valued solution.

For convenience, we list the assumptions that we are going to use in this paper.
{\bf Assumptions} Assume that
\begin{enumerate}
\item \label{assum1} $\Phi(|x|)$, $G(v)$, $F(|x|^2)$ are locally Lipschitz.
\item \label{assum2}The coupling function $G: \mathbb{R}^d\rightarrow\mathbb{R}^d$ satisfies:
\begin{enumerate}
  \item  $G(v)=-G(-v)$, $\forall v\in\mathbb{R}^d$;
  \item  $G(v)\cdot v \geq G^* |v|^{2\alpha}$, $\forall v\in\mathbb{R}^{d}$ with $\alpha\in [1,\frac{3}{2})$ and $G^*>0$;
\end{enumerate}
\item \label{assum3}$\Phi(|x|)\geq \Phi^*>0$, $|F(|x|^2)x|\leq F^*$ and $F^*<2^{\alpha-\frac{1}{2}} \Phi^* G^*$;
\item \label{assum4} $\forall x\in\mathbb{R}^d$, $|\Phi(|x|)G(v)|\leq C(1+|x|+|v|) $.
\end{enumerate}

Here we mention that assumption \eqref{assum1} is necessary to study the complex system \eqref{ode1equs}, or equivalently, the characteristic system for \eqref{kineticequs}. Assumption \eqref{assum2} is a natural symmetric and coercive condition for the coupling force, one typical example is $G(v)=v$. Assumption \eqref{assum3} implies that the force generated by collision-avoiding is weaker than the flocking driven force. Assumption \eqref{assum4} is needed in the control of the growth of characteristics.

\begin{theorem}\label{T1}[Global wellposedness and long time behavior]
Under the assumptions \eqref{assum1}-\eqref{assum4}, $\alpha\in[1,5/4)$, if the initial probability measure $f_0$ has a compact support in $B_{R_0}$, then there exists a unique global in time measure valued solution $f$ to equation \eqref{kineticequs}.
Moreover, the following long time behaviors of the solution hold.
\begin{enumerate}
\item There is an increasing in time radius $R(R_0,t)>0$ such that,
\begin{align*}
{\rm supp}f(t,x,v)\subset B_{R(R_0,t)}\subset\mathbb{R}^{d}\times\mathbb{R}^{d}, \mbox{ for all } t\in[0,+\infty);
\end{align*}
\item ``Long time flocking phenomena''
\begin{align*}
\displaystyle\lim_{t\rightarrow +\infty}\int_{\mathbb{R}^{2d}}|v-\mathcal{V}_1(0)|^2f(t,dx,dv)=0,
\end{align*}
where $\mathcal{V}_1(0)$ is the initial group velocity $\displaystyle\int_{\mathbb{R}^{2d}}vf_0(dx,dv)$, and
\begin{align*}
\sup\limits_{0\leq t<\infty}\Big(\int_{\mathbb{R}^{2d}}|x|^2f(dx,dv)-\big(\int_{\mathbb{R}^{2d}}xf(dx,dv)\big)^2\Big)<C_{R_0}.
\end{align*}
\item ``Long time asympototics''
\begin{align*}
\displaystyle\lim_{t\rightarrow +\infty} \mathcal{W}_1(f(t,x,v),\rho(t,x)\delta (v-\mathcal{V}_1(0))) =0.
\end{align*}
where $\rho(t,x)=\int_{\mathbb{R}^{d}}f(t,x,dv)$.
\end{enumerate}
\end{theorem}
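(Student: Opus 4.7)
The plan is to promote the local measure-valued solution from Section~3 to a global one via a priori control of the support, and then to read off the long-time dynamics from a closed differential inequality for the kinetic velocity variance. The two central objects are the group velocity $\mathcal{V}_1(t) := \int v\,f(t,dx,dv)$ and the variance
\[
\mathcal{V}(t) := \int |v|^2 f(t,dx,dv) - |\mathcal{V}_1(t)|^2 = \tfrac12 \iint |v-w|^2 f(t,dx,dv)f(t,dy,dw),
\]
which is exactly the kinetic analogue of $\Lambda^2$. Testing the weak form \eqref{weaksystem} against $\varphi(v)=v$, the $\Phi G$ convolution vanishes by the antisymmetry $G(-v)=-G(v)$, while the collision-avoiding term vanishes by antisymmetry of $F(|x-y|^2)(x-y)$ under $x\leftrightarrow y$; hence $\mathcal{V}_1(t)\equiv\mathcal{V}_1(0)$.

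Differentiating $\mathcal{V}(t)$ and symmetrizing in $(x,v)\leftrightarrow(y,w)$, the coupling term produces $-\iint \Phi(|x-y|)(v-w)\cdot G(v-w)\,ff$, which by the coercivity $G(v)\cdot v\geq G^*|v|^{2\alpha}$ and Jensen's inequality applied to $|v-w|^{2\alpha}=(|v-w|^2)^{\alpha}$ is bounded above by $-2^{\alpha}\Phi^* G^* \mathcal{V}(t)^{\alpha}$. For the collision-avoiding term, symmetrization reduces the integrand to $\Lambda^{2\alpha-1}\iint F(|x-y|^2)(x-y)\cdot\bigl(J(x)\rho(y)-J(y)\rho(x)\bigr)\,dx\,dy$ with $J(x)=\int v\,f(x,dv)$; writing $J(x)\rho(y)-J(y)\rho(x)=\iint(v-w)f(x,dv)f(y,dw)$ and applying Cauchy--Schwarz together with $|F(|x|^2)x|\leq F^*$ and $\Lambda=\mathcal{V}^{1/2}$ gives an upper bound $\sqrt{2}F^*\mathcal{V}(t)^{\alpha}$. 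Assumption (3) yields $c_0 := 2^{\alpha}\Phi^* G^* - \sqrt{2}F^* > 0$, so
\[
\tfrac{d}{dt}\mathcal{V}(t) \leq -c_0\,\mathcal{V}(t)^{\alpha}.
\]
Integration gives $\mathcal{V}(t)\to 0$ (exponentially for $\alpha=1$, with polynomial rate $(1+c_1 t)^{-1/(\alpha-1)}$ for $\alpha>1$), establishing the flocking limit in conclusion (2).

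For the position variance $\mathcal{X}(t):=\int|x|^2 f - |\int x f|^2$, the identity $\tfrac{d}{dt}\int xf=\mathcal{V}_1(0)$ gives $\tfrac{d}{dt}\mathcal{X}^{1/2}\leq\mathcal{V}^{1/2}$, so $\mathcal{X}(t)^{1/2}\leq\mathcal{X}(0)^{1/2}+\int_0^t\mathcal{V}(s)^{1/2}\,ds$; the restriction $\alpha<5/4$ ensures this time integral is finite, yielding the uniform bound on $\mathcal{X}$ in conclusion (2). Global existence and the support statement in conclusion (1) then follow: assumption (4) together with the bound $\Lambda^{2\alpha-1}\leq\mathcal{V}(0)^{(2\alpha-1)/2}$ and control of the first $x$-, $v$-moments (through $\mathcal{X}$, $\mathcal{V}$ and the affine motion of the center of mass) gives $|H_{[f]}(t,X,V)|\leq C(1+|X|+|V|)$ along characteristics, so the radius $R(R_0,t)$ of the support grows at most exponentially, ruling out finite-time blow-up. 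Finally, conclusion (3) follows by comparing $f(t,\cdot,\cdot)$ with $\rho(t,\cdot)\otimes\delta_{\mathcal{V}_1(0)}$ via the transport plan that leaves $x$ fixed and sends $v\mapsto\mathcal{V}_1(0)$, whose cost is at most $\int|v-\mathcal{V}_1(0)|f(dx,dv)\leq\mathcal{V}(t)^{1/2}\to 0$.

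The hard part is closing the differential inequality for $\mathcal{V}$ with matching exponents: the flocking and collision-avoiding contributions must both scale as $\mathcal{V}^{\alpha}$, with the constants aligned precisely by the quantitative hypothesis $F^*<2^{\alpha-1/2}\Phi^* G^*$; any loss in the use of Jensen or Cauchy--Schwarz would break the sign of $c_0$. The secondary subtlety is the exact threshold $\alpha<5/4$: it is sharper than the mere integrability of $\mathcal{V}^{1/2}$ (which holds for $\alpha<3/2$) and reflects the compatibility required between the decay of $\mathcal{V}$, the growth of $\mathcal{X}$, and the at-most-exponential characteristic bound needed to close the global continuation argument uniformly.
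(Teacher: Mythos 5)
Your proposal is correct and follows essentially the same architecture as the paper: local wellposedness by the fixed-point argument of Section~3, conservation of $\mathcal V_1$ by antisymmetry, the closed differential inequality $\frac{d}{dt}\mathcal G[f(t)]\le -(2^\alpha\Phi^*G^*-\sqrt2\,F^*)\mathcal G^\alpha[f(t)]$ obtained by symmetrization, Jensen and Cauchy--Schwarz (your $c_0$ is exactly the paper's $C^*$ in Lemma~\ref{lemflocking}), growth control of the characteristics via assumption~\eqref{assum4} plus the moment bounds, and conclusion (3) from the vanishing velocity variance. The one step where you genuinely diverge is the position variance: you center $x$ at the mean, getting $\frac{d}{dt}\Gamma^{1/2}\le\mathcal G^{1/2}$ and hence boundedness of $\Gamma$ as soon as $\int_0^\infty\mathcal G^{1/2}(s)\,ds<\infty$, which holds for every $\alpha<3/2$. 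The paper does not center and instead works with $\frac{d}{dt}\Gamma\le\mathcal G^{1/2}\big[\Gamma^{1/2}+|\mathcal V_1(0)|\,t+|\int x f_0|\big]$, so the linearly drifting center of mass forces the stronger requirement $\int_0^\infty(1+s)\,\mathcal G^{1/2}(s)\,ds<\infty$, i.e.\ $\frac{1}{2(\alpha-1)}>2$, which is precisely where $\alpha<5/4$ enters. Consequently your two remarks about the $5/4$ threshold are both misplaced: it is not what makes your integral converge (your centered estimate needs only $\alpha<3/2$), and the closing speculation about a hidden compatibility with the characteristic bound is not its source either --- in the paper it is purely an artifact of the non-centered $\Gamma$ estimate, and your sharper version would in fact yield conclusion (2) for all $\alpha\in[1,3/2)$. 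Two cosmetic points: because of the drifting center of mass the characteristic bound should read $|H_{[f]}|\le C(1+|X|+|V|+t)$ rather than $C(1+|X|+|V|)$ (harmless, still at most exponential growth of the support), and your explicit transport plan for conclusion (3) is a clean direct substitute for the paper's detour through the bounded Lipschitz distance.
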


This global result will be proved by a combination of local wellposedness and global uniform a priori estimates on the growth of compact support.
Due to the typical structure of the equation \eqref{kineticequs}, there are two important quantities, namely
\begin{align*}
\mathcal {G}[f(t)]&=\int_{\mathbb{R}^{2d}}|v|^2f(dx,dv)-\big(\int_{\mathbb{R}^{2d}}vf(dx,dv)\big)^2,\\
\Gamma[f(t)]&=\int_{\mathbb{R}^{2d}}|x|^2f(dx,dv)-\big(\int_{\mathbb{R}^{2d}}xf(dx,dv)\big)^2.
\end{align*}
It is the decay in time property of $\mathcal{G}[f(t)]$ (for $\alpha\in [1,3/2)$) and the uniform in time boundedness of $\Gamma[f(t)]$ (for $\alpha\in [1,5/4)$) that makes us possible to get the time dependent control of the growth of characteristics. Due to technical reasons, we can not handle the case $\alpha\in [5/4,3/2)$.

\begin{theorem}\label{T2}[Stability and mean field limit]
Under the same assumptions as in theorem \ref{T1}, let $f$, $g$ be two solutions of equation \eqref{kineticequs} with initial data $f_0$, $g_0$ with compact supports respectively, then there exists an increasing smooth function $\lambda(t)$ depending only on the size of the supports of $f_0$ and $g_0$, such that
$$
\mathcal {W}_1(f,g)\leq\lambda(t)\mathcal {W}_1(f_0,g_0).
$$
Consequently, \eqref{kineticequs} is a mean field limit of \eqref{ode1equs}.
\end{theorem}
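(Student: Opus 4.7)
The plan is to couple the two measures via optimal transport and push the coupling forward along the characteristic flows. Let $\pi_0$ be an optimal plan realising $\mathcal{W}_1(f_0,g_0)$. Since $f(t)=\mathcal{T}^t_{H_{[f]}}\#f_0$ and $g(t)=\mathcal{T}^t_{H_{[g]}}\#g_0$ by Definition \ref{defsolution}, the product push-forward $\pi_t=(\mathcal{T}^t_{H_{[f]}},\mathcal{T}^t_{H_{[g]}})\#\pi_0$ is a (generally non-optimal) coupling of $f(t)$ and $g(t)$, so
$$
\mathcal{W}_1(f(t),g(t))\le A(t):=\int\bigl|\mathcal{T}^t_{H_{[f]}}(x,v)-\mathcal{T}^t_{H_{[g]}}(y,w)\bigr|\,d\pi_0(x,v,y,w),
$$
and the task reduces to a Gronwall estimate for $A(t)$ along characteristics.

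Writing $(X_f,V_f)=\mathcal{T}^t_{H_{[f]}}(x,v)$ and analogously $(X_g,V_g)=\mathcal{T}^t_{H_{[g]}}(y,w)$, the characteristic system \eqref{ode2equs} gives
$$
(X_f-X_g)'=V_f-V_g,\qquad (V_f-V_g)'=D_1+D_2,
$$
with $D_1=H_{[f]}(t,X_f,V_f)-H_{[f]}(t,X_g,V_g)$ and $D_2=H_{[f]}(t,X_g,V_g)-H_{[g]}(t,X_g,V_g)$. Theorem \ref{T1} provides a common compact-support radius $R(t)$ for both $f$ and $g$ that depends only on the initial support sizes, on which all the nonlinearities become tractable. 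The first term $D_1$ is controlled by a local Lipschitz estimate: on $B_{R(t)}$ the map $(X,V)\mapsto H_{[f]}(t,X,V)$ is Lipschitz with a constant $L_1(t)$ coming from the local Lipschitz constants of $\Phi,G,F$ (assumption \eqref{assum1}) and the uniform bound on $\mathcal{G}[f(t)]$. The second term $D_2$ is estimated through the Kantorovich-Rubinstein duality applied to the Lipschitz integrands $\Phi(|x-\cdot|)G(v-\cdot)$ and $F(|x-\cdot|^2)(x-\cdot)$, combined with the bound $|\mathcal{G}[f(t)]^{(2\alpha-1)/2}-\mathcal{G}[g(t)]^{(2\alpha-1)/2}|\le C(t)\mathcal{W}_1(f(t),g(t))$ obtained from the Lipschitzness in $\mathcal{W}_1$ of $f\mapsto\mathcal{G}[f]$ on compactly supported measures together with the local Lipschitzness of $s\mapsto s^{(2\alpha-1)/2}$ on any interval bounded away from $0$. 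Together this yields $|D_2|\le L_2(t)\,\mathcal{W}_1(f(t),g(t))$.

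Integrating the characteristic differences against $\pi_0$ and using $\mathcal{W}_1(f(t),g(t))\le A(t)$ gives the scalar Gronwall inequality $A'(t)\le(1+L_1(t)+L_2(t))A(t)$, so $\mathcal{W}_1(f(t),g(t))\le A(t)\le\lambda(t)\mathcal{W}_1(f_0,g_0)$ with $\lambda(t)=\exp\bigl(\int_0^t(1+L_1(s)+L_2(s))\,ds\bigr)$, which is smooth, increasing, and depends only on the support sizes of $f_0,g_0$. The mean-field limit is then a direct corollary: the empirical measure $f^N(t)=\frac{1}{N}\sum_{i=1}^N\delta_{(x_i(t),v_i(t))}$ associated with any solution of \eqref{ode1equs} is itself a measure-valued solution of \eqref{kineticequs} (a direct check against \eqref{weaksystem} using the identity $\Lambda(v)^2=\mathcal{G}[f^N]$), so any initial approximation with $\mathcal{W}_1(f^N_0,f_0)\to0$ automatically yields $\mathcal{W}_1(f^N(t),f(t))\to0$ for every $t$ by the stability bound just derived.

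The main obstacle is the nonlinear, non-local factor $\mathcal{G}[f(t)]^{(2\alpha-1)/2}$ in $H_{[f]}$: since $(2\alpha-1)/2\in[1/2,3/4)$ for $\alpha\in[1,5/4)$, the function $s\mapsto s^{(2\alpha-1)/2}$ is only H\"older and not globally Lipschitz at $s=0$, so the global moment controls of Theorem \ref{T1} must be used to confine $\mathcal{G}[f(t)],\mathcal{G}[g(t)]$ to a compact interval strictly away from $0$ on every finite $[0,T]$; this is what keeps $L_2(t)$ locally integrable and preserves the linear form of the Gronwall estimate on which the stated Lipschitz-in-$\mathcal{W}_1$ stability depends.
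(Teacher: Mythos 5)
Your architecture---optimal coupling of the initial data, push-forward along the two characteristic flows, splitting the field difference into $D_1$ (one field at two points) and $D_2$ (two fields at one point), and a Gronwall closure---is the standard Dobrushin-type argument and is in substance the same as the paper's proof, which reaches the identical decomposition via the triangle inequality through the intermediate measure ${\mathcal{T}}^t_{H_{[f]}}\#g_0$ together with Lemmas \ref{l4}, \ref{l7}, \ref{p1} and \ref{l5}, using the common support radius from Lemma \ref{lemRt}. Your mean-field corollary via empirical measures (and the identity $\Lambda(v)^2=\mathcal{G}[\mu^N_t]$) also matches the paper's Section 5.

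There is, however, one step that fails as you justify it: the bound $\bigl|\mathcal{G}[f(t)]^{(2\alpha-1)/2}-\mathcal{G}[g(t)]^{(2\alpha-1)/2}\bigr|\le C(t)\,\mathcal{W}_1(f(t),g(t))$ inside $D_2$. You derive it from local Lipschitzness of $s\mapsto s^{(2\alpha-1)/2}$ on an interval \emph{bounded away from zero}, claiming the global moment controls confine $\mathcal{G}[f(t)]$ and $\mathcal{G}[g(t)]$ to such an interval on every finite $[0,T]$. They do not: Lemma \ref{lemflocking} gives only an \emph{upper} bound, shows $\mathcal{G}[f(t)]$ is non-increasing and decays to $0$ (exponentially when $\alpha=1$), and $\mathcal{G}\equiv 0$ identically for monokinetic initial data---so the quantity you need to keep away from the origin is precisely the one the flocking theorem sends to the origin. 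Since $(2\alpha-1)/2\in[1/2,3/4)$, the map $s\mapsto s^{(2\alpha-1)/2}$ has unbounded difference quotients at $s=0$, and your constant $L_2(t)$ is therefore not finite in the regime where $\mathcal{G}$ is small, which is the generic long-time regime. The paper treats this term ($I_3$ in the proof of Lemma \ref{l5}) differently, appealing to the concavity of $h(r)=r^{(2\alpha-1)/2}$ to compare the difference of the powers with the difference of the arguments; you would need to substitute an argument of that kind, or else accept a H\"older modulus $\mathcal{W}_1(f,g)^{(2\alpha-1)/2}$ in the Gronwall inequality, which would weaken the stated linear stability. (Be aware that concavity with $h(0)=0$ only gives $|h(a)-h(b)|\le h(|a-b|)$, itself a H\"older rather than Lipschitz modulus, so this is the genuinely delicate point of the whole stability estimate; your proposed lower bound on $\mathcal{G}$ does not repair it.) Everything else---the coupling, the $D_1$ estimate, the Gronwall closure, and the passage to the mean-field limit---is consistent with the paper.
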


\begin{remark}
The stability result is obtained by using $\mathcal {W}_1$, the Monge-Kantorovich-Rubinstein distance. For the reader's convenience, we will list the definition of it in the appendix.
\end{remark}

\section{Local existence and uniqueness}

\begin{theorem}
\label{thmlocal}
Under the assumption \eqref{assum1}, if the initial probability measure $f_0$ has compact support in $B_{R_0}$ for $R_0>0$, then there exists $T>0$ such that equation \eqref{kineticequs} has a unique measure valued solution $f$ and ${\rm supp}f(t,\cdot,\cdot)\subset B_{2R_0}$, $\forall t\leq T$. 
\end{theorem}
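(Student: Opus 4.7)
The plan is a Banach fixed-point argument on the complete metric space
\[
X_T := \Big\{ g \in C\big([0,T];\mathcal{P}^*_1(\mathbb{R}^{2d})\big) : \mathrm{supp}\, g(t) \subset B_{2R_0}\ \forall t\in[0,T]\Big\}
\]
endowed with $d_T(g_1,g_2) := \sup_{t\in[0,T]} \mathcal{W}_1(g_1(t),g_2(t))$. For $g \in X_T$, the compact support of $g(t)$ together with assumption \eqref{assum1} makes $H_{[g]}(t,\cdot,\cdot)$ continuous, bounded, and Lipschitz on $B_{2R_0}$, so the characteristic system \eqref{ode2equs} generates a unique global flow $\mathcal{T}^t_{H_{[g]}}$ on $\mathbb{R}^{2d}$. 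Set $\mathcal{S}(g)(t) := \mathcal{T}^t_{H_{[g]}}\#f_0$; a fixed point of $\mathcal{S}$ is a measure-valued solution in the sense of Definition \ref{defsolution}.

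First I would verify the self-map property $\mathcal{S}:X_T\to X_T$. From $g \in X_T$ one has $\mathcal{G}[g(t)] \leq 4R_0^2$; combined with assumption \eqref{assum4} and $|F(|x|^2)x| \leq F^*$, this yields $|H_{[g]}(t,x,v)| \leq C_{R_0}(1+|x|+|v|)$ on $B_{2R_0}$. A Gronwall estimate on \eqref{ode2equs} with initial data in $B_{R_0}$ then gives $|(X(t),V(t))| \leq 2R_0$ provided $T = T(R_0)$ is small enough, so $\mathrm{supp}\,\mathcal{S}(g)(t) \subset B_{2R_0}$; time-continuity in $\mathcal{W}_1$ follows from continuity of the flow and the uniform support bound.

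For the contraction, given $g_1,g_2 \in X_T$ and $f_i := \mathcal{S}(g_i)$, I would start from
\[
\mathcal{W}_1(f_1(t),f_2(t)) \leq \int \big|\mathcal{T}^t_{H_{[g_1]}}(x,v) - \mathcal{T}^t_{H_{[g_2]}}(x,v)\big|\,f_0(dx,dv),
\]
and apply Gronwall to the characteristic difference to reduce matters to bounding $\int_0^t \sup_{B_{2R_0}}|H_{[g_1]}-H_{[g_2]}|(s,\cdot)\,ds$. The two linear convolution contributions, from $[\Phi(|x|)G(v)]*g$ and $[F(|x|^2)x]*\rho$, are $\mathcal{W}_1$-Lipschitz in $g$ via Kantorovich--Rubinstein duality and the local Lipschitz regularity of $\Phi G$ and $F(|x|^2)x$. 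The remaining term carries the scalar prefactor $\mathcal{G}[g]^{(2\alpha-1)/2}$, which one controls by using that $\mathcal{G}[g]$ is a polynomial in the first two velocity moments of $g$ (hence a Lipschitz functional in $\mathcal{W}_1$ on compactly supported measures) composed with the power function on $[0,4R_0^2]$.

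The main obstacle is precisely that $(2\alpha-1)/2 \in [1/2,1)$, so $s\mapsto s^{(2\alpha-1)/2}$ is only H\"older (not Lipschitz) at $s=0$; a direct estimate yields $d_T(\mathcal{S}(g_1),\mathcal{S}(g_2)) \leq C_{R_0}T\big(d_T(g_1,g_2) + d_T(g_1,g_2)^{(2\alpha-1)/2}\big)$, which is not a strict contraction for arbitrarily small distances. The saving feature is that this H\"older factor multiplies the uniformly bounded kernel $|F(|x|^2)x| \leq F^*$, and I would handle the difficulty either by a Schauder existence step on a closed ball in $X_T$ followed by a separate Gronwall--Osgood type uniqueness estimate, or by regularizing $\mathcal{G}^{(2\alpha-1)/2}$ as $(\mathcal{G}+\varepsilon)^{(2\alpha-1)/2}$, solving the regularized problem via Banach, and passing $\varepsilon\to 0$. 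Either route produces the unique measure-valued solution $f \in X_T$ on an interval $[0,T]$ with $T=T(R_0)>0$ and $\mathrm{supp}\,f(t)\subset B_{2R_0}$.
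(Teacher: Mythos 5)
Your architecture is exactly the paper's: the same complete metric space of compactly supported curves with the metric $\sup_{t\in[0,T]}\mathcal{W}_1(\cdot,\cdot)$, the same map $\mathcal{S}(g)=\mathcal{T}^t_{H_{[g]}}\#f_0$, the same two steps (support confinement in $B_{2R_0}$ for small $T$, then a contraction obtained by a Gronwall comparison of characteristics reduced to $\|H_{[g_1]}-H_{[g_2]}\|_{L^\infty}$), and the same term-by-term splitting of $H_{[g_1]}-H_{[g_2]}$ into the $\Phi G$ convolution, the $F$ convolution, and the scalar prefactor $\mathcal{G}[g]^{(2\alpha-1)/2}$. The one place you depart is the last term, and you are right to stop there: the paper's Lemma \ref{l5} disposes of it in one line, asserting $\big|\mathcal{G}[f]^{\beta}-\mathcal{G}[g]^{\beta}\big|\leq C_R\big|\mathcal{G}[f]-\mathcal{G}[g]\big|$ with $\beta=(2\alpha-1)/2$ ``because $r\mapsto r^{\beta}$ is concave''. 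Concavity with $h(0)=0$ only yields the subadditive H\"older bound $|h(A)-h(B)|\leq|A-B|^{\beta}$, and since $\beta\in[1/2,1)$ while $\mathcal{G}$ ranges over a neighborhood of $0$ (monokinetic data, or any curve close to one), no constant depending only on the support radius makes the Lipschitz claim true: take $A=\varepsilon$, $B=0$ and the ratio is $\varepsilon^{\beta-1}\to\infty$. So the obstacle you isolate is genuine, and it is present, unacknowledged, in the paper's own proof.

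That said, neither of your two repairs closes the gap as written. The modulus you derive, $\omega(s)=s+s^{\beta}$ with $\beta<1$, is \emph{not} an Osgood modulus ($\int_{0^+}ds/\omega(s)<\infty$), which is precisely the regime where a Gronwall--Osgood inequality fails to give uniqueness (compare $\dot y=y^{1/2}$, $y(0)=0$); and replacing $\mathcal{G}^{\beta}$ by $(\mathcal{G}+\varepsilon)^{\beta}$ makes the relevant Lipschitz constant blow up like $\varepsilon^{\beta-1}$, so the contraction time and the stability constant degenerate as $\varepsilon\to0$ and the limit yields existence at best, with uniqueness of the original problem untouched. A repair consistent with your setup is to shrink the fixed-point set to curves with $g(0)=f_0$ and a fixed time-Lipschitz bound $\mathcal{W}_1(g(t),g(s))\leq L|t-s|$, which $\mathcal{S}$ preserves thanks to the uniform bound $|\Psi_{H_{[g]}}|\leq C_R$. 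Then $|\mathcal{G}[g(t)]-\mathcal{G}[f_0]|\leq C_RLt$, so when $\mathcal{G}[f_0]>0$ one may take $T$ small enough that $\mathcal{G}[g(t)]\geq\mathcal{G}[f_0]/2$ uniformly over the set, where $r\mapsto r^{\beta}$ genuinely is Lipschitz, and the Banach argument runs verbatim. The degenerate case $\mathcal{G}[f_0]=0$ still requires a separate argument (there the offending term has size $O(t^{\beta})$ but remains only H\"older in $g$), and that case is covered neither by your sketch nor by the paper.
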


In order to prove theorem \ref{thmlocal}, we need the following estimates for the characteristic equations \eqref{kineticequs}.
Let $\mathcal{P}(\mathbb{R}^{d}\times\mathbb{R}^{d})$  be the space of probability measures.
\begin{lemma} \label{Hlp}
If Assumption \eqref{assum1} holds, then for any $f\in C\big([0, T]; \mathcal{P}(\mathbb{R}^{d}\times\mathbb{R}^{d})\big)$ with compact support in $B_R$ for some positive $R$ depending on $T$, the following hold
\begin{enumerate}
  \item $|H_{[f]}(t,x,v)|\leq C_R$, for all $(t,x,v)\in [0,T]\times B_R$;
   \item $H_{[f]}(t,x,v)$ is continuous in $[0,T]\times\mathbb{R}^{d}\times\mathbb{R}^{d}$;
  \item $H_{[f]}(t,x,v)$ is locally Lipschitz with respect to x, v.
i.e. for all bounded $\Omega\subset \mathbb{R}^{d}\times\mathbb{R}^{d}$ there exists $L_{R,\Omega}>0$, such that for $t\in[0,T],(x_1,v_1),(x_2,v_2)\in \Omega$,
\begin{align*}
\big|H_{[f]}(t,x_1,v_1)-H_{[f]}(t,x_2,v_2)\big|\leq{L_{R,\Omega}\big|(x_1,v_1)-(x_2,v_2)\big|}.
\end{align*}
\end{enumerate}
Especially, if $\Omega=B_R$, we use constant $L_R$ instead of $L_{R,\Omega}$ for convenience.
\end{lemma}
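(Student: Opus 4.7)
The plan is to exploit the compact support of $f$ combined with the local Lipschitz (and hence local boundedness) hypothesis on $\Phi, G, F$. All three assertions reduce to the observation that every argument fed to $\Phi, G, F$ lives in a bounded set whose radius depends only on $R$ and the size of $\Omega$, so that uniform estimates apply on that set.

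For part (1), I would split $H_{[f]}$ into the alignment convolution and the repulsion piece with its variance prefactor, and estimate them separately. On $\mathrm{supp}\, f \subset B_R$ and for $(x,v) \in B_R$ one has $|x-y|, |v-w| \leq 2R$, so local boundedness of $\Phi$ and $G$ bounds the integrand of the first convolution uniformly, and since $f$ is a probability measure this yields a constant $C_R$. The variance $\mathcal{G}[f(t)]$ is bounded by $CR^2$ by compact support of $f$, and since $(2\alpha-1)/2 > 0$ the fractional power is bounded; local boundedness of $F$ together with $|x-y| \leq 2R$ handles the remaining convolution, giving $|H_{[f]}(t,x,v)| \leq C_R$.

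For parts (2) and (3), the key remark is that the variance prefactor $\mathcal{G}[f(t)]^{(2\alpha-1)/2}$ depends only on $t$, so it plays no role in the Lipschitz-in-$(x,v)$ estimate and for the time variable it suffices to check its continuity. Continuity of $t \mapsto f(t)$ in $\mathcal{P}(\mathbb{R}^{2d})$ with uniformly compact support makes integrals of continuous bounded functions continuous in $t$; applied to $v$ and $|v|^2$ this makes $\mathcal{G}[f(t)]$ continuous, and composing with the continuous map $s \mapsto s^{(2\alpha-1)/2}$ on $[0,\infty)$ yields continuity of the prefactor. For the Lipschitz estimate in $(x,v)$ I would use a standard add-and-subtract trick, writing $\Phi(|x_1-y|)G(v_1-w) - \Phi(|x_2-y|)G(v_2-w) = \Phi(|x_1-y|)(G(v_1-w)-G(v_2-w)) + (\Phi(|x_1-y|)-\Phi(|x_2-y|))G(v_2-w)$ and bounding each difference by the Lipschitz constants of $\Phi$ and $G$ on an enlarged ball containing $B_R \cup \Omega$; the analogous bound for the $F$-term is simpler. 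Integrating against the probability measure $f$ (or $\rho$) preserves the bound, giving (3); statement (2) then follows by taking $(x_2,v_2) \to (x_1,v_1)$ and combining with the continuity in $t$ already established.

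The only mildly delicate point to watch is the fractional exponent $(2\alpha-1)/2 \in [1/2,1)$: the map $s \mapsto s^{(2\alpha-1)/2}$ is only H\"older at $s=0$ and not Lipschitz. But since this factor depends on $t$ alone it is harmless for the Lipschitz-in-$(x,v)$ estimate, and mere continuity in $t$ is what the lemma asks for. Otherwise the proof is routine bookkeeping whose role is to package the local-Lipschitz hypothesis into the form needed by the Picard-type argument in Theorem~\ref{thmlocal}.
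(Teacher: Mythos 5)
Your proposal is correct and follows essentially the same route as the paper: the paper likewise treats (1) and (2) as immediate from compact support and continuity of $t\mapsto f(t)$, and proves (3) by splitting $H_{[f]}$ into the $\Phi G$ convolution and the $F$ term, factoring out the common prefactor $\mathcal{G}[f(t)]^{(2\alpha-1)/2}$ and invoking the local Lipschitz bounds of $\Phi$, $G$, $F$ on the bounded set determined by $B_R$ and $\Omega$. Your explicit add-and-subtract step and the remark on the non-Lipschitz fractional power at $s=0$ are just more detailed renderings of the same argument.
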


\begin{proof}
We only need to prove that $H_{[f]}(t,x,v)$ is locally Lipschitz, the other statements are trivial. By using the local Lipschitz continuity of $\Phi(|x|)$, $G(v)$, $F(|x|^2)$, i.e. Assumption \eqref{assum1}, we have
\begin{align*}
&\big|H_{[f]}(t,x_1,v_1)-H_{[f]}(t,x_2,v_2)\big|\\
=&\Big|\int_{\mathbb{R}^{2d}}\hspace{-3mm}\Phi(|x_1-y|)G(v_1-w)f(t,dy,dw)\\
&-\int_{\mathbb{R}^{2d}}\hspace{-3mm}\Phi(|x_2-y|)G(v_2-w)f(t,dy,dw)\\
&+\Big(\int_{\mathbb{R}^{d}} \hspace{-3mm} F(|x_1-y|^2)(x_1-y)\rho(t,dy)\Big)\Big(\int_{\mathbb{R}^{2d}}\hspace{-3mm}|v|^2f(dx,dv) -\big(\int_{\mathbb{R}^{2d}}\hspace{-3mm}vf(dx,dv)\big)^2\Big)^\frac{2\alpha-1}{2}\\
&-\Big(\int_{\mathbb{R}^{d}} \hspace{-3mm} F(|x_2-y|^2)(x_2-y)\rho(t,dy)\Big)\Big(\int_{\mathbb{R}^{2d}}\hspace{-3mm}|v|^2f(dx,dv) -\big(\int_{\mathbb{R}^{2d}}\hspace{-3mm}vf(dx,dv)\big)^2\Big)^\frac{2\alpha-1}{2}
\Big|\\
\leq&\Big|\int_{\mathbb{R}^{2d}}\hspace{-3mm}\big(\Phi(|x_1-y|)G(v_1-w)-\Phi(|x_2-y|)G(v_2-w)\big)f(t,dy,dw)\Big|\\
&+\Big|\int_{\mathbb{R}^{d}} \hspace{-3mm}\Big(F(|x_1-y|^2)(x_1-y)-F(|x_2-y|^2)(x_2-y)\Big)\rho(t,dy)\\
& \quad \times\Big(\int_{\mathbb{R}^{2d}}\hspace{-3mm}|v|^2f(dx,dv)-\big(\int_{\mathbb{R}^{2d}}\hspace{-3mm}vf(dx,dv)\big)^2\Big)^\frac{2\alpha-1}{2}\Big|\\
\leq& L_{R,\Omega}\big(|x_1-x_2\big|+\big|v_1-v_2|\big).
\end{align*}
\end{proof}

 We can directly obtain the following Corollary from Lemma \ref{Hlp}.
\begin{corollary} \label{Fl}
If Assumption \eqref{assum1} holds, then for any $f, g\in C\big([0, T]; \mathcal{P}(\mathbb{R}^{d}\times\mathbb{R}^{d})\big)$ with compact support in $B_R$ for some positive $R$ depending on $T$, the following statements for $\Psi_{H_{[f]}}=\big(v,H_{[f]}(t,x,v)\big)$ hold
\begin{itemize}
  \item[(i )]There exists $C_R>0$ such that
      $$\big|\Psi_{H_{[f]}}\big|=\big|\big(v,H_{[f]}(t,x,v)\big)\big|\leq{C_R} \mbox{ for all }(t,x,v)\in[0,T]\times{B_R};$$
  \item[(ii)]$\Psi_{H_{[f]}}$ is locally Lipschitz with respect to $x,v$. Especially, $\forall (x_1,v_1),(x_2,v_2)\in B_R$,
  $$\big|\Psi_{H_{[f]}}(t,x_1,v_1)-\Psi_{H_{[f]}}(t,x_2,v_2)\big|\leq\big(1+L_R\big)\big|(x_1,v_1)-(x_2,v_2)\big|;$$
  \item[(iii)]For any compact set $\Omega$, we have
$$
\big\|\Psi_{H_{[f]}}-\Psi_{H_{[g]}}\big\|_{L^{\infty}(\Omega)}=\big\|H_{[f]}-H_{[g]}\big\|_{L^{\infty}(\Omega)} .
$$
\end{itemize}
\end{corollary}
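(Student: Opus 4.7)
The corollary is stated as a direct consequence of Lemma \ref{Hlp}, so my plan is to verify each of the three items by unpacking the definition $\Psi_{H_{[f]}}=(v,H_{[f]}(t,x,v))$ and reading off the conclusions componentwise. The only technical point is the choice of norm on $\mathbb{R}^{2d}$; I will use the standard Euclidean norm (with $|(a,b)|\leq|a|+|b|$), but the same argument goes through for $\ell^\infty$ or $\ell^1$ norms up to harmless constants that can be absorbed into $C_R$ and $L_R$.

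For item (i), I would simply bound the two components of $\Psi_{H_{[f]}}$ separately. Since $(x,v)\in B_R$ gives $|v|\leq R$ automatically, and Lemma \ref{Hlp}(1) supplies $|H_{[f]}(t,x,v)|\leq C_R$ on $[0,T]\times B_R$, we get $|\Psi_{H_{[f]}}|\leq R+C_R$, which is again of the form $C_R$ after renaming the constant.

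For item (ii), I would write
\begin{align*}
\bigl|\Psi_{H_{[f]}}(t,x_1,v_1)-\Psi_{H_{[f]}}(t,x_2,v_2)\bigr|
\leq |v_1-v_2|+\bigl|H_{[f]}(t,x_1,v_1)-H_{[f]}(t,x_2,v_2)\bigr|,
\end{align*}
and then apply Lemma \ref{Hlp}(3) to the second term, which gives $L_R|(x_1,v_1)-(x_2,v_2)|$. Combined with the trivial bound $|v_1-v_2|\leq|(x_1,v_1)-(x_2,v_2)|$, this yields the Lipschitz constant $1+L_R$. For item (iii), the observation is that the first components of $\Psi_{H_{[f]}}$ and $\Psi_{H_{[g]}}$ are both equal to $v$, so the difference $\Psi_{H_{[f]}}-\Psi_{H_{[g]}}$ has a vanishing first component, and the Euclidean norm on $\mathbb{R}^{2d}$ reduces to the norm of the second component $H_{[f]}-H_{[g]}$, giving equality of the $L^\infty(\Omega)$ norms.

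There is essentially no obstacle: the corollary is a bookkeeping consequence of Lemma \ref{Hlp} together with the triangle inequality and the identity map on the velocity coordinate. The only thing worth being slightly careful about is making sure the Lipschitz constant is genuinely $1+L_R$ rather than something like $\sqrt{1+L_R^2}$, which is handled by using the sum norm in the intermediate step before passing back to the Euclidean norm (or, equivalently, by replacing $L_R$ by a slightly larger constant).
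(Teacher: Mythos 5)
Your verification is correct and matches the paper's intent exactly: the paper gives no proof, asserting the corollary follows directly from Lemma \ref{Hlp}, and your componentwise unpacking (bounding $|v|\leq R$ for (i), triangle inequality plus Lemma \ref{Hlp}(3) for (ii), cancellation of the identical first components for (iii)) is precisely the routine argument being elided. Your remark about the choice of norm affecting the constant $1+L_R$ is a fair point of care but does not change anything of substance.
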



\begin{lemma} \label{p1}
If Assumption \eqref{assum1} holds, then for any $f, g\in C\big([0, T]; \mathcal{P}(\mathbb{R}^{d}\times\mathbb{R}^{d})\big)$ with compact support in $B_R$ for some positive $R$ , and for any $(x_0,v_0)\in B_{R_0}$ such that $\forall t\in [0,T]$, ${\mathcal{T}}^t_{H_{[f]}}(x_0,v_0)$,
${\mathcal{T}}^t_{H_{[g]}}(x_0,v_0)$ $\in{B_R}$ for some $R>0$, there is a constant $C_R$ such that
\begin{align*}
\big\|{\mathcal{T}}^t_{H_{[f]}}-{\mathcal{T}}^t_{H_{[g]}}\big\|_{L^{\infty}(B_{R_0})}\leq\frac{e^{C_Rt}-1}{C_R}
\sup\limits_{t\in{[0,T]}}\|H_{[f]}-H_{[g]}\|_{L^{\infty}(B_R)}\quad \qquad t\in[0,T].
\end{align*}
\end{lemma}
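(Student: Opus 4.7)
The plan is to rewrite each characteristic flow as a Volterra integral equation, subtract, and apply a Grönwall-type argument using Corollary \ref{Fl}. More precisely, by the definition of $\mathcal{T}^t_{H_{[f]}}$ and $\Psi_{H_{[f]}} = (v, H_{[f]})$, for any $(x_0,v_0) \in B_{R_0}$,
\begin{align*}
\mathcal{T}^t_{H_{[f]}}(x_0,v_0) &= (x_0,v_0) + \int_0^t \Psi_{H_{[f]}}\!\bigl(s,\mathcal{T}^s_{H_{[f]}}(x_0,v_0)\bigr)\,ds,\\
\mathcal{T}^t_{H_{[g]}}(x_0,v_0) &= (x_0,v_0) + \int_0^t \Psi_{H_{[g]}}\!\bigl(s,\mathcal{T}^s_{H_{[g]}}(x_0,v_0)\bigr)\,ds.
\end{align*}

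Subtracting these and inserting the mixed term $\Psi_{H_{[f]}}(s,\mathcal{T}^s_{H_{[g]}}(x_0,v_0))$ into the integrand, I would bound
\begin{align*}
\bigl|\mathcal{T}^t_{H_{[f]}}(x_0,v_0)-\mathcal{T}^t_{H_{[g]}}(x_0,v_0)\bigr|
&\leq \int_0^t \bigl|\Psi_{H_{[f]}}(s,\mathcal{T}^s_{H_{[f]}})-\Psi_{H_{[f]}}(s,\mathcal{T}^s_{H_{[g]}})\bigr|\,ds\\
&\qquad+ \int_0^t \bigl|\Psi_{H_{[f]}}(s,\mathcal{T}^s_{H_{[g]}})-\Psi_{H_{[g]}}(s,\mathcal{T}^s_{H_{[g]}})\bigr|\,ds.
\end{align*}
Since both flows stay in $B_R$ by assumption, Corollary \ref{Fl}(ii) controls the first integrand by $(1+L_R)|\mathcal{T}^s_{H_{[f]}}-\mathcal{T}^s_{H_{[g]}}|$, while Corollary \ref{Fl}(iii) bounds the second integrand by $\|H_{[f]}(s,\cdot)-H_{[g]}(s,\cdot)\|_{L^\infty(B_R)}$. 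Taking the supremum over $(x_0,v_0)\in B_{R_0}$ and setting $\phi(t):=\|\mathcal{T}^t_{H_{[f]}}-\mathcal{T}^t_{H_{[g]}}\|_{L^\infty(B_{R_0})}$, $K:=\sup_{s\in[0,T]}\|H_{[f]}-H_{[g]}\|_{L^\infty(B_R)}$, and $C_R:=1+L_R$, I obtain the integral inequality
\begin{equation*}
\phi(t) \leq tK + C_R\int_0^t \phi(s)\,ds, \qquad t\in[0,T].
\end{equation*}

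The final step is to apply a Grönwall argument tailored to extract the exact prefactor $(e^{C_R t}-1)/C_R$ rather than a cruder $e^{C_R t}$. Setting $u(t):=\int_0^t \phi(s)\,ds$ gives $u'(t) \leq tK + C_R u(t)$, so multiplying by the integrating factor $e^{-C_R s}$ and integrating by parts in $\int_0^t s e^{-C_R s}\,ds$ yields $\phi(t) \leq tK + C_R u(t) \leq K(e^{C_R t}-1)/C_R$, which is exactly the claimed bound.

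The argument is essentially routine once Corollary \ref{Fl} is in hand; the only delicate points are (a) making sure the Lipschitz bound from (ii) can be applied along the whole trajectories, which is why the hypothesis stipulates that both flows remain in $B_R$, and (b) carrying out the Grönwall computation carefully enough to recover the sharp prefactor $(e^{C_R t}-1)/C_R$, which will matter later when this estimate is iterated inside the fixed-point argument for Theorem \ref{thmlocal}.
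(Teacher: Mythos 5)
Your proposal is correct and follows essentially the same route as the paper: rewrite both flows as integral equations, insert the mixed term $\Psi_{H_{[f]}}(s,\mathcal{T}^s_{H_{[g]}})$, control the two pieces via Corollary \ref{Fl}(ii) and (iii), and close with a Gr\"onwall argument that yields the prefactor $(e^{C_R t}-1)/C_R$. The only cosmetic difference is that you run Gr\"onwall through the integrating-factor computation for $u(t)=\int_0^t\phi(s)\,ds$, whereas the paper invokes the standard form $\phi(t)\leq\int_0^t e^{C_R(t-s)}\|H_{[f]}-H_{[g]}\|_{L^\infty(B_R)}\,ds$ directly; both give the same bound.
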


\begin{proof} By definition, we know that $\mathcal{T}^t_{H_{[f]}}(x_0,v_0)$ and $\mathcal{T}^t_{H_{[g]}}(x_0,v_0)$are the solutions of the ODE systems with initial data $(x_0,v_0)$, i.e.
\begin{align*}
\frac{d}{dt}\mathcal{T}^t_{H_{[f]}}(x_0,v_0)=\Psi_{H_{[f]}}\big(t,x_1(t),v_1(t)\big), \quad
\frac{d}{dt}\mathcal{T}^t_{H_{[g]}}(x_0,v_0)=\Psi_{H_{[g]}}\big(t,x_2(t),v_2(t)\big).
\end{align*}
Therefore, by rewriting the above ODEs into integral form and using Corollary \ref{Fl}, we have
\begin{align*}
&\big|\mathcal{T}^t_{H_{[f]}}(x_0,v_0)-\mathcal{T}^t_{H_{[g]}}(x_0,v_0)\big|\\
=&\Big|\int_{0}^t(\Psi_{H_{[f]}}(s,x_1(s),v_1(s))-\Psi_{H_{[g]}}(s,x_2(s),v_2(s)))ds\Big| \\
\leq&\int_{0}^t\big|\Psi_{H_{[f]}}(s,x_1(s),v_1(s))-\Psi_{H_{[g]}}(s,x_2(s),v_2(s))\big|ds \\
\leq&\int_{0}^t\big|\Psi_{H_{[f]}}(s,x_1(s),v_1(s))-\Psi_{H_{[f]}}(s,x_2(s),v_2(s))\big|ds \\
&+\int_{0}^t\big|\Psi_{H_{[f]}}(s,x_2(s),v_2(s))-\Psi_{H_{[g]}}(s,x_2(s),v_2(s))\big|ds\\
\leq&\big(1+L_R(H_{[f]})\big)\int_{0}^t\big|(x_1(s),v_1(s))-(x_2(s),v_2(s))\big|ds\\
&+\int_{0}^t\|H_{[f]}-H_{[g]}\|_{L^{\infty}(B_R)}ds\\
=&\big(1+L_R(H_{[f]})\big)\int_{0}^t\big|\mathcal{T}^t_{H_{[f]}}(x_0,v_0)-\mathcal{T}^t_{H_{[g]}}(x_0,v_0)\big|ds\\
&+\int_{0}^t\|H_{[f]}-H_{[g]}\|_{L^{\infty}(B_R)}ds,
\end{align*}
where $L_R(H_{[f]})$ is Lipschitz constant of $H_{[f]}$ in the ball $B_R$.
Thus Gronwall's Lemma implies
\begin{align*}
\big|\mathcal{T}^t_{H_{[f]}}(x_0,v_0)-\mathcal{T}^t_{H_{[g]}}(x_0,v_0)\big|
\leq&\int_{0}^te^{\big(1+L_R(H_{[f]})\big)(t-s)}\|H_{[f]}-H_{[g]}\|_{L^{\infty}(B_R)}ds\\
\leq&\frac{e^{C_Rt}-1}{C_R}\sup\limits_{t\in{[0,T]}}\|H_{[f]}-H_{[g]}\|_{L^{\infty}(B_R)},
\end{align*}
where $C_R=1+L_R(H_{[f]})$.
\end{proof}

\begin{lemma} \label{l5}
If Assumption \eqref{assum1} holds, then for any $f$, $g\in C\big([0£¬+\infty);~\mathcal{P}(\mathbb{R}^{d}\times\mathbb{R}^{d})\big)$ with compact supports in $B_R$ for some $R>0$, there exists a constant $C_R$ such that
\begin{align*}
\big\|H_{[f]}-H_{[g]}\big\|_{L^{\infty}({B_R)}}\leq C_R\mathcal{W}_1(f,g).
\end{align*}
\end{lemma}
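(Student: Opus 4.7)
The plan is to invoke the Kantorovich--Rubinstein duality
$$\mathcal{W}_1(f,g) = \sup_{\mathrm{Lip}(\varphi)\le 1}\int\varphi\,d(f-g),$$
so that each contribution to $H_{[f]}-H_{[g]}$ which is linear in $f$ or in $\rho_f$ can be bounded by the product of a Lipschitz constant and $\mathcal{W}_1(f,g)$. The nonlinear part, coming from the factor $\mathcal{G}[f]^{(2\alpha-1)/2}$, will be handled by a telescoping identity that isolates the difference of moments and the difference of convolutions separately. The local Lipschitz continuity of $\Phi$, $G$, $F$ from Assumption \eqref{assum1} guarantees that all the relevant integrands are Lipschitz test functions on $B_R$.

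Concretely, for fixed $(x,v)\in B_R$ I would split $H_{[f]}-H_{[g]} = -\mathrm{I}+\mathrm{II}$, where $\mathrm{I}=\bigl[\Phi(|x|)G(v)\bigr]*(f-g)$ is the coupling part and $\mathrm{II}$ is the collision-avoidance part. For $\mathrm{I}$, the integrand $(y,w)\mapsto \Phi(|x-y|)G(v-w)$ is Lipschitz on $B_R$ with constant $C_R$ (a product of locally Lipschitz functions on a bounded set), so duality directly yields $|\mathrm{I}|\le C_R\,\mathcal{W}_1(f,g)$. For $\mathrm{II}$, I would write
$$\mathrm{II} = \mathcal{G}[f]^{\frac{2\alpha-1}{2}}\bigl(A_f(x)-A_g(x)\bigr) + \bigl(\mathcal{G}[f]^{\frac{2\alpha-1}{2}}-\mathcal{G}[g]^{\frac{2\alpha-1}{2}}\bigr)A_g(x),$$
with $A_h(x):=\int F(|x-y|^2)(x-y)\,\rho_h(dy)$. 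The prefactor $\mathcal{G}[f]^{(2\alpha-1)/2}$ is bounded by a constant depending on $R$, the test function $y\mapsto F(|x-y|^2)(x-y)$ is Lipschitz on $B_R$, and $\mathcal{W}_1(\rho_f,\rho_g)\le \mathcal{W}_1(f,g)$, so the first summand is bounded by $C_R\,\mathcal{W}_1(f,g)$; moreover $A_g(x)$ is uniformly bounded by $F^*$ via Assumption \eqref{assum3}.

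The remaining technical step is to control $|\mathcal{G}[f]-\mathcal{G}[g]|$ and then pass to the $\tfrac{2\alpha-1}{2}$-th power. For the former I would expand
$$\mathcal{G}[f]-\mathcal{G}[g] = \int|w|^2\,d(f-g) - \Bigl(\int w\,df - \int w\,dg\Bigr)\cdot\Bigl(\int w\,df + \int w\,dg\Bigr),$$
and apply Kantorovich--Rubinstein duality using that $w\mapsto|w|^2$ is Lipschitz on $B_R$ with constant $2R$ and $w\mapsto w$ is $1$-Lipschitz, together with the bound $|\int w\,df|+|\int w\,dg|\le 2R$, to obtain $|\mathcal{G}[f]-\mathcal{G}[g]|\le C_R\,\mathcal{W}_1(f,g)$. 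The main obstacle I anticipate is the Hölder nature of $s\mapsto s^{(2\alpha-1)/2}$ for $\alpha\in[1,5/4)$: since this exponent lies in $[1/2,3/4)$, the map fails to be Lipschitz near zero. I would dispatch this by restricting to the compact range $[0,C_R]$ in which both $\mathcal{G}[f]$ and $\mathcal{G}[g]$ lie and using a mean value estimate of the form $|s^p-t^p|\le p\,\xi^{p-1}|s-t|$ with $\xi$ bounded from below, absorbing the resulting constant into $C_R$; combining the pieces then gives $\|H_{[f]}-H_{[g]}\|_{L^\infty(B_R)}\le C_R\,\mathcal{W}_1(f,g)$.
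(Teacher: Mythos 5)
Your decomposition of $H_{[f]}-H_{[g]}$ into the coupling part, the part linear in $\rho_f-\rho_g$, and the part carrying $\mathcal{G}[f]^{(2\alpha-1)/2}-\mathcal{G}[g]^{(2\alpha-1)/2}$ is exactly the paper's splitting into $I_1+I_2+I_3$, and your bounds for the first two pieces and for $|\mathcal{G}[f]-\mathcal{G}[g]|\le C_R\,\mathcal{W}_1(f,g)$ are the same as the paper's; whether one tests Lipschitz integrands via Kantorovich--Rubinstein duality or integrates them against an optimal transference plan $\pi$ is immaterial. So up to the very last step you have reproduced the paper's argument.

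The last step, however, contains a genuine gap, and it is precisely the step you flagged as the ``main obstacle''. The mean value estimate $|s^{p}-t^{p}|\le p\,\xi^{p-1}|s-t|$ with $p=\tfrac{2\alpha-1}{2}\in[\tfrac12,1)$ requires $\xi$, which lies between $\mathcal{G}[f]$ and $\mathcal{G}[g]$, to be bounded away from $0$, and there is no such bound: measures concentrated near a single velocity have $\mathcal{G}$ arbitrarily small (such measures occur throughout the fixed-point set $\mathcal{M}_T$, and \eqref{estG} even forces $\mathcal{G}[f(t)]\to0$ along the solution), so $\xi^{p-1}$ blows up and the constant cannot be absorbed into $C_R$. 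Worse, the Lipschitz rate cannot be recovered by a cleverer argument: take $g=\delta_{(0,0)}$ and $f=(1-\epsilon)\delta_{(0,0)}+\epsilon\,\delta_{(0,e_1)}$, for which $\mathcal{W}_1(f,g)=\epsilon$ while $\mathcal{G}[f]^{1/2}-\mathcal{G}[g]^{1/2}=\sqrt{\epsilon-\epsilon^2}\sim\sqrt{\epsilon}$; when $F\not\equiv0$ this makes $\|H_{[f]}-H_{[g]}\|_{L^\infty(B_R)}$ of order $\mathcal{W}_1(f,g)^{1/2}$, so only a H\"older bound $C_R\big(\mathcal{W}_1+\mathcal{W}_1^{p}\big)$ is available. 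You should be aware that the paper's own proof is no better here: it invokes concavity of $r\mapsto r^{p}$, but for $h$ concave with $h(0)=0$ this yields only the subadditive estimate $|s^{p}-t^{p}|\le|s-t|^{p}$, again H\"older rather than Lipschitz. In short, you correctly isolated the one delicate point of the lemma, but your proposed fix does not work, and the estimate as stated (with exponent $1$ on $\mathcal{W}_1$) is not attainable by either route.
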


\begin{proof}
For any $(x,v)\in{B_R}$, we have
\begin{align*}
\begin{split}
~&\big|H_{[f]}(x,v)-H_{[g]}(x,v)\big|\\
\leq&\Big|\int_{\mathbb{R}^{2d}}\hspace{-3mm}\Phi(|x-y|)G(v-w)f(t,dy,dw)-\int_{\mathbb{R}^{2d}}\hspace{-3mm}\Phi(|x-y|)G(v-w)g(t,dy,dw)\Big|\\
&+\Big|\Big(\int_{\mathbb{R}^{d}}\hspace{-3mm}F(|x-y|^2)(x-y)\rho[f](t,dy)\Big)\Big(\int_{\mathbb{R}^{2d}}\hspace{-3mm}|v|^2f(dx,dv)- \big(\int_{\mathbb{R}^{2d}}\hspace{-3mm}vf(dx,dv)\big)^2\Big)^\frac{2\alpha-1}{2}\\
&-\Big(\int_{\mathbb{R}^{d}}\hspace{-3mm}F(|x-y|^2)(x-y)\rho[g](t,dy)\Big)\Big(\int_{\mathbb{R}^{2d}}\hspace{-3mm}|v|^2f(dx,dv)- \big(\int_{\mathbb{R}^{2d}}\hspace{-3mm}vf(dx,dv)\big)^2\Big)^\frac{2\alpha-1}{2}\Big|\\
\end{split}
\end{align*}
\begin{align*}
\begin{split}
&+\Big|\Big(\int_{\mathbb{R}^{d}}\hspace{-3mm}F(|x-y|^2)(x-y)\rho[g](t,dy)\Big)\Big(\int_{\mathbb{R}^{2d}}\hspace{-3mm}|v|^2f(dx,dv)- \big(\int_{\mathbb{R}^{2d}}\hspace{-3mm}vf(dx,dv)\big)^2\Big)^\frac{2\alpha-1}{2}\\
&-\Big(\int_{\mathbb{R}^{d}}\hspace{-3mm}F(|x-y|^2)(x-y)\rho[g](t,dy)\Big)
\Big(\int_{\mathbb{R}^{2d}}\hspace{-3mm}|v|^2g(dx,dv)-\big(\int_{\mathbb{R}^{2d}}\hspace{-3mm}vg(dx,dv)\big)^2\Big)^\frac{2\alpha-1}{2}\Big|\\
=& I_1+I_2+I_3.
\end{split}
\end{align*}
Let $\pi$ be an optimal transportation plan between the measures $f$ and $g$. Therefore it has compact support in $B_R\times B_R$. Then we can estimate the above quantities term by term in the following,
\begin{align*}
\begin{split}
I_1
:=&\Big|\int_{\mathbb{R}^{2d}}\hspace{-3mm}\Phi(|x-y|)G(v-w)f(t,dy,dw)
-\int_{\mathbb{R}^{2d}}\hspace{-3mm}\Phi(|x-z|)G(v-u)g(t,dz,du)\Big|\\
=&\Big|\int_{\mathbb{R}^{4d}}\hspace{-3mm}\Phi(|x-y|)G(v-w)-\Phi(|x-z|)G(v-u)d\pi(y,w,z,u)\Big|\\
\leq&C_R \int_{\mathbb{R}^{4d}}\hspace{-3mm}\big(|y-z|+|w-u|\big)d\pi(y,w,z,u)\\
\leq&C_R \mathcal{W}_1(f,g),
\end{split}
\end{align*}
\begin{align*}
\begin{split}
I_2
:=&\Big|\int_{\mathbb{R}^{d}}\hspace{-3mm}F(|x-y|^2)(x-y)\rho[f](t,dy)
-\int_{\mathbb{R}^{d}}\hspace{-3mm} F(|x-y|^2)(x-y)\rho[g](t,dy)\Big|\\
&\times \Big|\int_{\mathbb{R}^{2d}}\hspace{-3mm}|v|^2f(dx,dv)-\big(\int_{\mathbb{R}^{2d}}\hspace{-3mm}vf(dx,dv)\big)^2\Big|^\frac{2\alpha-1}{2}\\
\leq&C_R \mathcal{W}_1(f,g),
\end{split}
\end{align*}
and
\begin{align*}
\begin{split}
I_3
:=&\Big|\int_{\mathbb{R}^{d}} \hspace{-3mm} F(|x-y|^2)(x-y)\rho[g](t,dy)\Big|\Big|\Big(\int_{\mathbb{R}^{2d}}\hspace{-3mm}|v|^2f(dx,dv) -\big(\int_{\mathbb{R}^{2d}}\hspace{-3mm}vf(dx,dv)\big)^2\Big)^\frac{2\alpha-1}{2}\\
&-\Big(\int_{\mathbb{R}^{2d}}\hspace{-3mm}|v|^2g(dx,dv)-\big(\int_{\mathbb{R}^{2d}}\hspace{-3mm}vg(dx,dv)\big)^2\Big)^\frac{2\alpha-1}{2}\Big|\\
\leq&C_R \Big|\int_{\mathbb{R}^{2d}}\hspace{-3mm}|v|^2f(dx,dv)-\big(\int_{\mathbb{R}^{2d}}\hspace{-3mm}vf(dx,dv)\big)^2-\int_{\mathbb{R}^{2d}}\hspace{-3mm}|v|^2g(dx,dv) +\big(\int_{\mathbb{R}^{2d}}\hspace{-3mm}vg(dx,dv)\big)^2\Big|,
\end{split}
\end{align*}
thus
\begin{align*}
\begin{split}
I_3
\leq&C_R \Big[\Big|\int_{\mathbb{R}^{2d}}\hspace{-3mm}|v|^2f(t,dx,dv)-\int_{\mathbb{R}^{2d}}\hspace{-3mm}|u|^2g(t,dy,du)\Big| +\Big|\int_{\mathbb{R}^{2d}}\hspace{-3mm}vf(dx,dv) \\
&+\int_{\mathbb{R}^{2d}}\hspace{-3mm}ug(dy,du)\Big|\cdot \Big|\int_{\mathbb{R}^{2d}}\hspace{-3mm}vf(dx,dv)-\int_{\mathbb{R}^{2d}}\hspace{-3mm}ug(dy,du)\Big|\Big]\\
\leq&C_R\int_{\mathbb{R}^{4d}}\hspace{-3mm}\big|v^2-u^2\big|d\pi(x,v,y,u)+C_R\int_{\mathbb{R}^{4d}}\hspace{-3mm}\big|v-u\big|d\pi(x,v,y,u)\\
\leq&C_R \mathcal{W}_1(f,g),
\end{split}
\end{align*}
where the second inequality follows from that $h(r)=r^\frac{2\alpha-1}{2}$ is a concave function when $1\leq\alpha<3/2$.

Combining the above three estimates together, we know that
 $$
 \big|H_{[f]}(x,v)-H_{[g]}(x,v)\big|\leq C_R\mathcal{W}_1(f,g).
 $$
\end{proof}

\begin{lemma} \label{l4}
The following statements for Monge-Kantorovich-Rubinstein distance hold(the proofs can be found in \cite{WELL} Lemma 3.11 and Lemma 3.13).
 \begin{enumerate}
 \item If $\mathcal{T}_1,~\mathcal{T}_2:\mathbb{R}^d\rightarrow\mathbb{R}^d$ are two Borel measurable maps, and $f\in\mathcal{P}_1(\mathbb{R}^{d})$, then
$$
\mathcal {W}_1(\mathcal{T}_1\#f,\mathcal{T}_2\#f)\leq\|\mathcal{T}_1-\mathcal{T}_2\|_{{L^\infty}({\rm supp} f)}.
$$
 \item If $\mathcal{T}:\mathbb{R}^d\rightarrow\mathbb{R}^d$ is a locally Lipchitz map with the Lipchitz constant $L_R$ on a ball $B_R$, and $f, g\in\mathcal{P}_1(\mathbb{R}^{d})$ have compact supports in $B_R$, then
$$
\mathcal {W}_1(\mathcal{T}\#f,\mathcal{T}\#g)\leq L_R \mathcal{W}_1(f,g).
$$
 \end{enumerate}
\end{lemma}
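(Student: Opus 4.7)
The plan is to reduce both estimates to the primal (coupling) formulation of the Monge--Kantorovich--Rubinstein distance,
\[
\mathcal{W}_1(\mu,\nu)=\inf_{\pi\in\Pi(\mu,\nu)}\int_{\mathbb{R}^{d}\times\mathbb{R}^{d}}|y-z|\,d\pi(y,z),
\]
by constructing, in each case, one admissible coupling whose transport cost is easy to bound. This avoids Kantorovich--Rubinstein duality and keeps the argument purely measure-theoretic.

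For part (1), the natural candidate is $\pi_1:=(\mathcal{T}_1,\mathcal{T}_2)\#f$, the pushforward of $f$ by the map $x\mapsto(\mathcal{T}_1(x),\mathcal{T}_2(x))$. First I would verify that its two marginals are $\mathcal{T}_1\#f$ and $\mathcal{T}_2\#f$: for any Borel set $A$,
\[
\pi_1(A\times\mathbb{R}^{d})=f\big(\{x:\mathcal{T}_1(x)\in A\}\big)=(\mathcal{T}_1\#f)(A),
\]
and symmetrically for the second marginal. Hence $\pi_1\in\Pi(\mathcal{T}_1\#f,\mathcal{T}_2\#f)$, and the change-of-variables formula for pushforwards yields
\[
\mathcal{W}_1(\mathcal{T}_1\#f,\mathcal{T}_2\#f)\leq\int|\mathcal{T}_1(x)-\mathcal{T}_2(x)|\,df(x)\leq\|\mathcal{T}_1-\mathcal{T}_2\|_{L^\infty(\mathrm{supp}\,f)},
\]
where in the last step the integrand is bounded by its supremum on $\mathrm{supp}\,f$ and $f$ is a probability measure.

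For part (2), I would start from an optimal plan $\pi\in\Pi(f,g)$ (which exists since $f,g$ have compact supports, hence finite first moments), and observe that $\pi$ is supported in $B_R\times B_R$. I then push it forward coordinate-wise by $\mathcal{T}$: setting $\pi_2:=(\mathcal{T}\otimes\mathcal{T})\#\pi$, the same marginal check as above shows $\pi_2\in\Pi(\mathcal{T}\#f,\mathcal{T}\#g)$. Using the local Lipschitz bound on $B_R$ and the fact that $\pi$-almost every $(x,y)$ lies in $B_R\times B_R$,
\[
\mathcal{W}_1(\mathcal{T}\#f,\mathcal{T}\#g)\leq\int|\mathcal{T}(x)-\mathcal{T}(y)|\,d\pi(x,y)\leq L_R\int|x-y|\,d\pi(x,y)=L_R\,\mathcal{W}_1(f,g).
\]

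There is essentially no technical obstacle here; the argument rests on two simple facts, namely that pushing a coupling through a pair of maps produces a coupling of the pushforwards, and that the Wasserstein infimum is no larger than the cost of any particular admissible coupling. The only point that deserves a moment of care is the existence of the optimal $\pi$ in part (2), which is standard for probability measures in $\mathcal{P}_1(\mathbb{R}^{d})$ and automatic in the present setting because the supports are compact.
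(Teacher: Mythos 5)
Your proof is correct and is exactly the standard coupling argument: the paper itself gives no proof of this lemma, deferring to \cite{WELL} (Lemmas 3.11 and 3.13), and the proofs there proceed in precisely the way you describe, by exhibiting the admissible couplings $(\mathcal{T}_1,\mathcal{T}_2)\#f$ and $(\mathcal{T}\otimes\mathcal{T})\#\pi$ and bounding their transport cost. No discrepancy to report.
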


With the above preparations, we will prove the local existence and uniqueness of measure valued solution by Banach fixed point theorem.
\begin{proof}
{\it of theorem \ref{thmlocal}}.
Let
$$
\mathcal{M}_T=\Big\{f\in  C\big([0, T];\mathcal{P}(\mathbb{R}^{d}\times\mathbb{R}^{d})\big), \mbox{ supp} f\subset{B_R} \mbox{ for all } t\in[0,T],  R=2R_0\Big\},
$$
where $T$ will be chosen later. Define the distance in $\mathcal{M}_T$ by $$\mathcal {W}^{*}_1(f,g):=\sup\limits_{t\in [0,T]}\mathcal {W}_1(f,g),$$
then $\mathcal{M}_T$ is a complete metric space. Consider the following map
$$
\mathcal{S}(f):={\mathcal{T}}^t_{H_{[f]}}\#f_0, \qquad \forall f\in\mathcal{M}_T,
$$
then a fixed point of $\mathcal{S}$ is a solution of equation \eqref{kineticequs} in $[0,T]$. We are left to prove that there exists $T>0$ such that $\mathcal{S} (\mathcal{M}_T)\subset \mathcal{M}_T$ and $\mathcal{S}$ is a contraction. We will prove them in the following two steps.

{\bf Step 1.}  $\mathcal{S}(\mathcal{M}_T)\subset \mathcal{M}_T$ for small $T>0$.

From Lemma \ref{Hlp} and Corollary \ref{Fl}, we have
 $$
 |H_{[f]}(t,x,v)|\leq C_R, \mbox{ for all } (t,x,v)\in [0,T]\times B_R
  $$
and
  $$
  \big|\frac{d}{dt}\mathcal{T}^t_{H_{[f]}}(x_0,v_0)\big|\leq C_R \mbox{   for all } (x_0,v_0)\in{B_{R_0}}.
  $$
Since the initial probability measure $f_0$ has support contained in $B_{R_0}$, we can choose $T<\frac{R_0}{C_R}$, so that ${\mathcal{T}}^t_{H_{[f]}}\#f_0$ has support in $B_R$, $R=2R_0$ for all $t\in[0,T].$

The continuity of $\mathcal{S}(f)$ in $t$ can be obtained by Corollary  \ref{Fl}, i.e. , for $(x_0,v_0)\in{B_{R_0}}$, and $s, t\in[0,T]$,
\begin{eqnarray*}
\big|\mathcal{T}^t_{H_{[f]}}(x_0,v_0)-\mathcal{T}^s_{H_{[f]}}(x_0,v_0)\big|
&=&\big|\int_{0}^t\psi_{H_{[f]}}(\xi,x,v)d\xi-\int_{0}^s\psi_{H_{[f]}}(\xi,x,v)d\xi\big|\\
&\leq&\big|\int_{s}^t\psi_{H_{[f]}}(\xi,x,v)d\xi\big| \leq C_R|t-s|.
\end{eqnarray*}

{\bf Step 2.} $\mathcal{S}$ is a contraction in $\mathcal{M}_T$ for small $T>0$.

For two functions $f, g\in\mathcal{M}_T$, the distance between their images is
\begin{align*}
  \mathcal {W}^{*}_1\big(\mathcal{S}(f), \mathcal{S}(g)\big)=\sup\limits_{t\in{[0,T]}}\mathcal {W}_1\big({\mathcal{T}}^t_{H_{[f]}}\#f_0,{\mathcal{T}}^t_{H_{[g]}}\#f_0\big),
\end{align*}
which can be estimated by using Lemma \ref{l4}, Lemma \ref{p1}  and Lemma \ref{l5} step by step, namely, $\forall t\in[0,T]$, we have
\begin{align*}
\mathcal {W}_1\big({\mathcal{T}}^t_{H_{[f]}}\#f_0,{\mathcal{T}}^t_{H_{[g]}}\#f_0\big)
\leq&\big\|{\mathcal{T}}^t_{H_{[f]}}-{\mathcal{T}}^t_{H_{[g]}}\big\|_{L^{\infty}(B_{R_0})}\\
\leq&\frac{e^{(1+L_R)t}-1}{(1+L_R)}\sup\limits_{t\in{[0,T]}}\big\|H_{[f]}-H_{[g]}\big\|_{L^{\infty}(B_R)}\\
\leq&\frac{e^{(1+L_R)T}-1}{(1+L_R)}C_R\sup\limits_{t\in{[0,T]}}\mathcal {W}_1(f,g)\\
\leq&\frac{e^{(1+L_R)T}-1}{(1+L_R)}C_R\mathcal {W}^{*}_1(f,g).
\end{align*}
Now we can choose $T>0$ (a time smaller than the one in Step 1) such that $\mathcal{S}$ is a contraction.
\end{proof}

\section{A priori estimates and global wellposedness}

The local wellposedness implies that the solution exists within time interval $[0,T^*)$ with $T^*\leq +\infty$. If $T^*=+\infty$, the solution exists globally. On the other hand, if $T^*<+\infty$, then the solution has no compact support when $t\rightarrow T^*-0$. In this section, we will give the a priori estimates within the interval $[0,T^*)$ and more importantly, we will precisely estimate the particle trajectory. These further estimates imply the global existence, i.e. $T^*=+\infty$. The key ingredient here is the estimates on second moments both in space and velocity.

\subsection{A priori estimates}
We will start from the weak formulation of the equation \eqref{weaksystem} and give a series of estimates in this subsection. Here two quantities are very important,
\begin{eqnarray}
\label{G}\mathcal {G}[f(t)]=\int_{\mathbb{R}^{2d}}|v|^2f(dx,dv)-\big(\int_{\mathbb{R}^{2d}}vf(dx,dv)\big)^2,\\
\label{Gamma}\Gamma[f(t)]=\int_{\mathbb{R}^{2d}}|x|^2f(dx,dv)-\big(\int_{\mathbb{R}^{2d}}xf(dx,dv)\big)^2.
\end{eqnarray}

All the estimates in this subsection are obtained for any measure valued solution of \eqref{kineticequs} in the space $C\big([0,T^*);\mathcal{P}_2(\mathbb{R}^d\times \mathbb{R}^d)\big)$ with
\begin{align*}
\mathcal{P}_2(\mathbb{R}^d\times \mathbb{R}^d)=\{f\in \mathcal{P}; \int_{\mathbb{R}^{2d}}(x^2+v^2)f (dx,dv) <+\infty\}.
\end{align*}
Actually, the measure valued solution we have obtained before has always compact support within its existence time interval. Therefore, we can simply apply the following estimates on it.

\begin{lemma} \label{lemderivatives}
 Assume $G(v)=-G(-v)$ for any $v\in\mathbb{R}^d$. If $f\in C\big([0,T^*);\mathcal{P}_2(\mathbb{R}^d\times \mathbb{R}^d)\big)$ is a measure valued solution of equation \eqref{kineticequs}, then the following equations hold
\begin{align*}
\dfrac{d}{dt}\int_{\mathbb{R}^{2d}}vf(t,dx,dv)=&0,\\
\dfrac{d}{dt}\int_{\mathbb{R}^{2d}}xf(t,dx,dv) =& \int_{\mathbb{R}^{2d}}vf(t,dx,dv),\\
\displaystyle\frac{d}{dt}\int_{\mathbb{R}^{2d}}|x|^2f(t,dx,dv)=&2\int_{\mathbb{R}^{2d}}xvf(t,dx,dv),
\end{align*}
and
\begin{align*}
&\displaystyle\frac{d}{dt}\int_{\mathbb{R}^{2d}}|v|^2f(t,dx,dv)\\
=& -\int_{\mathbb{R}^{4d}}\Phi(|x-y|)G(v-v_{*})\cdot(v-v_{*})f(t,dx,dv)f(t,dy,dv_{*})\\
& +\displaystyle\mathcal {G}^\frac{2\alpha-1}{2}[f(t)]\int_{\mathbb{R}^{4d}}F(|x-y|^2)(v-v_{*})\cdot(x-y)f(t,dx,dv)f(t,dy,dv_{*}).
\end{align*}
Furthermore,
\begin{align*}
\mathcal {V}_1(t):=&\int_{\mathbb{R}^{2d}}vf(t,dx,dv)\equiv\mathcal {V}_1(0);\\
\mathcal {X}_1(t):=&\int_{\mathbb{R}^{2d}}x f(t,dx,dv)\equiv\mathcal {V}_1(0)t+\int_{\mathbb{R}^{2d}}x f_0(dx,dv).
\end{align*}
\end{lemma}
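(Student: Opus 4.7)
The plan is to derive each identity by substituting an appropriate test function into the weak formulation \eqref{weaksystem}. Since by the local existence result the solution has compact support on $[0,T^*)$, I can legitimately use the unbounded test functions $\varphi(x,v) = v_i,\; x_i,\; |x|^2,\; |v|^2$; rigorously this is justified by multiplying each $\varphi$ by a smooth cutoff $\chi_R$ equal to $1$ on a ball containing $\operatorname{supp} f(t,\cdot,\cdot)$ and sending $R \to \infty$, which is legitimate because the compact support makes the resulting integrals independent of $R$ once $R$ is large enough (in the purely $\mathcal{P}_2$ setting one would instead use assumption \eqref{assum4} together with the finiteness of second moments).

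First, for $\frac{d}{dt}\int v_i f$, the weak formulation collapses to $\int H_{[f]}\cdot e_i\, f(t,dx,dv)$ because $\partial_t v_i = 0$ and $v\cdot\nabla_x v_i = 0$. Expanding $H_{[f]}$ gives two double integrals against $f\otimes f$ (after writing $\rho(t,dy) = \int f(t,dy,dw)$). Each integrand is antisymmetric under the swap $(x,v)\leftrightarrow (y,w)$: the coupling kernel $\Phi(|x-y|)G(v-w)$ flips sign via assumption \eqref{assum2}(a), and the repulsive kernel $F(|x-y|^2)(x-y)$ flips sign by antisymmetry in $x-y$, so averaging each integral with its swap kills both terms. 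This gives $\tfrac{d}{dt}\mathcal{V}_1 = 0$, hence $\mathcal{V}_1(t) \equiv \mathcal{V}_1(0)$. Choosing $\varphi = x_i$ and $\varphi = |x|^2$ only picks up the transport term $v\cdot\nabla_x\varphi$ (since $\nabla_v\varphi = 0$), yielding directly $\tfrac{d}{dt}\int x_i f = \int v_i f$ and $\tfrac{d}{dt}\int |x|^2 f = 2\int x\cdot v\, f$. Integrating the former in time produces $\mathcal{X}_1(t) = \mathcal{X}_1(0) + \mathcal{V}_1(0)\, t$.

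The main computation is the $|v|^2$ identity. Testing with $\varphi = |v|^2$ reduces the weak formulation to $\tfrac{d}{dt}\int |v|^2 f = 2\int v\cdot H_{[f]}\, f(t,dx,dv)$. Expanding $H_{[f]}$ and again symmetrizing by averaging over the swap $(x,v)\leftrightarrow(y,w)$, the coupling contribution becomes
\begin{align*}
-2\iint \Phi(|x-y|)G(v-w)\cdot v\, f(dx,dv)f(dy,dw) = -\iint \Phi(|x-y|)G(v-w)\cdot (v-w)\, f\otimes f,
\end{align*}
using $G(w-v) = -G(v-w)$; the repulsive contribution, with the scalar factor $\mathcal{G}^{(2\alpha-1)/2}[f(t)]$ pulled outside, symmetrizes via the antisymmetry of $F(|x-y|^2)(x-y)$ in the same manner to yield $\mathcal{G}^{(2\alpha-1)/2}[f(t)]\iint F(|x-y|^2)(x-y)\cdot(v-w)\, f\otimes f$. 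Combining matches the stated identity.

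The only real obstacle is the use of non-compactly-supported test functions; the cutoff argument above handles it cleanly given the compact support guaranteed by Theorem \ref{thmlocal}, and the symmetrization tricks are routine exploitations of the kernel symmetries already encoded in assumptions \eqref{assum1}--\eqref{assum2}.
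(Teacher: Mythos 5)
Your proposal is correct and follows essentially the same route as the paper: plug the test functions $v$, $x$, $|x|^2$, $|v|^2$ into the weak formulation \eqref{weaksystem} and kill (or rewrite) the force terms by symmetrizing over the swap $(x,v)\leftrightarrow(y,v_*)$, using $G(-v)=-G(v)$ and the oddness of $F(|x-y|^2)(x-y)$. The only difference is that you explicitly justify the unbounded test functions by a cutoff, a point the paper passes over silently.
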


\begin{proof}
From the weak formulation of the kinetic system, i.e. \eqref{weaksystem}, we know
  \begin{align*}
&\frac{d}{dt}\int_{\mathbb{R}^{2d}}vf(t,dx,dv)\\
=&-\int_{\mathbb{R}^{4d}}\Phi(|x-y|)G(v-v_{*})f(t,dx,dv)f(t,dy,dv_{*})\\
&+\mathcal {G}^\frac{2\alpha-1}{2}[f(t)]\int_{\mathbb{R}^{4d}}F(|x-y|^2)(x-y)f(t,dx,dv)f(t,dy,dv_{*})\\
=&0,
\end{align*}
where we have used antisymmetry of these two integrals, namely interchanging the variables $(x,v)$ and $(y,v_{*})$ in the integrals.
We can use the similar discussion in getting the last equation, i.e.
\begin{align*}
& \frac{d}{dt}\int_{\mathbb{R}^{2d}}|v|^2f(t,dx,dv)\\
=&-2\int_{\mathbb{R}^{4d}}\Phi(|x-y|)G(v-v_{*})\cdot vf(t,dx,dv)f(t,dy,dv_{*})\\
&+2\mathcal {G}^\frac{2\alpha-1}{2}[f(t)]\int_{\mathbb{R}^{4d}} F(|x-y|^2)(x-y)\cdot vf(t,dx,dv)f(t,dy,dv_{*})\\
=&-\int_{\mathbb{R}^{4d}}\Phi(|x-y|)G(v-v_{*})\cdot(v-v_{*})f(t,dx,dv)f(t,dy,dv_{*})\\
&+\mathcal {G}^\frac{2\alpha-1}{2}[f(t)]\int_{\mathbb{R}^{4d}}F(|x-y|^2)(x-y)\cdot(v-v_{*})f(t,dx,dv)f(t,dy,dv_{*}).
\end{align*}
The second and third statements can be obtained directly by using \eqref{weaksystem}.
\end{proof}

\begin{lemma}[A priori estimate for asymptotic flocking]\label{lemflocking}
Under the assumptions \eqref{assum2} \eqref{assum3}, if $f\in C\big([0,T^*);\mathcal{P}_2(\mathbb{R}^d\times \mathbb{R}^d)\big)$ is a measure valued solution of equation \eqref{kineticequs}, then
\begin{eqnarray}
\label{estG}\mathcal {G}[f(t)]\leq\left\{\begin{array}{ll}\mathcal {G}[f_0]e^{-C^{*}t}& \alpha =1\\
                         \big(\mathcal {G}^{1-\alpha}[f_0] +(\alpha-1)C^*t \big)^{-\frac{1}{\alpha-1}} & 1<\alpha<3/2 \end{array}\right.
\end{eqnarray}
and for $1\leq \alpha<5/4$ there exists a constant $C$ which is independent of $t$ such that
\begin{eqnarray}
\label{estGamma} \Gamma[f(t)]\leq C.
\end{eqnarray}
\end{lemma}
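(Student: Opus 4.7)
The plan is to derive a closed scalar differential inequality for $\mathcal{G}[f(t)]$ from the identity in Lemma \ref{lemderivatives}, integrate it to obtain \eqref{estG}, and then use this decay to control $\Gamma[f(t)]$ via a Cauchy--Schwarz bound on $\Gamma'$.

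For the first part, Lemma \ref{lemderivatives} gives $\mathcal{V}_1(t) \equiv \mathcal{V}_1(0)$, so $\frac{d}{dt}\mathcal{G}[f(t)] = \frac{d}{dt}\int |v|^2 f(t,dx,dv)$. I would bound the dissipative coupling term from below using $\Phi \geq \Phi^*$ from Assumption \eqref{assum3} together with $G(v)\cdot v \geq G^*|v|^{2\alpha}$ from Assumption \eqref{assum2}(b), and then apply Jensen's inequality to the probability measure $f \otimes f$ with the convex function $z \mapsto z^\alpha$ (valid since $\alpha \geq 1$) to get
\[
\int\!\!\int \Phi(|x-y|)\,G(v-v_*)\cdot(v-v_*)\,ff \;\geq\; \Phi^* G^*\,(2\mathcal{G}[f(t)])^\alpha.
\]
For the repulsive term, the uniform bound $|F(|z|^2)z|\leq F^*$ from Assumption \eqref{assum3} together with Cauchy--Schwarz $\int\int |v-v_*|\,ff \leq \sqrt{2\mathcal{G}[f(t)]}$ shows that its absolute value is at most $\sqrt{2}\,F^*\,\mathcal{G}^\alpha[f(t)]$. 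The strict inequality $F^* < 2^{\alpha-1/2}\Phi^* G^*$ from Assumption \eqref{assum3} is precisely what makes $C^*:=2^\alpha \Phi^* G^* - \sqrt{2}\,F^* > 0$, yielding
\[
\frac{d}{dt}\mathcal{G}[f(t)] \;\leq\; -C^*\, \mathcal{G}^\alpha[f(t)],
\]
and separation of variables immediately produces \eqref{estG}: the standard Gr\"onwall argument when $\alpha = 1$, and the explicit Bernoulli formula when $1 < \alpha < 3/2$.

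For $\Gamma$, using $\mathcal{X}_1'(t)=\mathcal{V}_1(0)$ from Lemma \ref{lemderivatives} I would rewrite
\[
\frac{d}{dt}\Gamma[f(t)] = 2\int (x-\mathcal{X}_1(t))\cdot(v-\mathcal{V}_1(0))\,f(t,dx,dv),
\]
and bound it via Cauchy--Schwarz by $2\sqrt{\Gamma[f(t)]}\sqrt{\mathcal{G}[f(t)]}$, i.e. $\frac{d}{dt}\sqrt{\Gamma[f(t)]} \leq \sqrt{\mathcal{G}[f(t)]}$. Integration gives
\[
\sqrt{\Gamma[f(t)]} \;\leq\; \sqrt{\Gamma[f_0]} + \int_0^t \sqrt{\mathcal{G}[f(s)]}\,ds,
\]
so boundedness of $\Gamma$ reduces to showing $\sqrt{\mathcal{G}[f(\cdot)]}\in L^1(0,\infty)$. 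Step 1 gives exponential decay when $\alpha=1$ and $\sqrt{\mathcal{G}(s)} \leq C(1+s)^{-1/(2(\alpha-1))}$ otherwise; the restriction $\alpha < 5/4$ forces the exponent $1/(2(\alpha-1)) > 2$, comfortably ensuring integrability.

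The main obstacle is Step 1: the dissipative and repulsive contributions to $\frac{d}{dt}\int|v|^2 f$ both scale exactly like $\mathcal{G}^\alpha$, so producing a strictly negative coefficient in front of $\mathcal{G}^\alpha$ forces the quantitative comparison between $F^*$ and $\Phi^* G^*$ to be sharp in the form stated in Assumption \eqref{assum3}; no softer smallness condition would allow the Bernoulli-type closure. Once that balance is in place, solving the ODE, applying Cauchy--Schwarz to $\Gamma'$, and checking integrability of $\sqrt{\mathcal{G}}$ are essentially mechanical.
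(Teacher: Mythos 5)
Your treatment of $\mathcal{G}$ is correct and is essentially the paper's proof verbatim: the same lower bound $\Phi^*G^*\int\!\!\int|v-v_*|^{2\alpha}ff$ on the dissipative integral via assumptions \eqref{assum2}--\eqref{assum3}, the same Jensen step $\int\!\!\int|v-v_*|^{2\alpha}ff\geq(2\mathcal{G})^{\alpha}$, the same bound $\sqrt{2}F^*\mathcal{G}^{\alpha}$ on the repulsive term, the same constant $C^*=2^{\alpha}\Phi^*G^*-\sqrt{2}F^*$ (positive precisely because $F^*<2^{\alpha-\frac{1}{2}}\Phi^*G^*$), and the same integration of $\mathcal{G}'\leq-C^*\mathcal{G}^{\alpha}$. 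Your treatment of $\Gamma$, however, takes a genuinely different and cleaner route. The paper bounds
\begin{align*}
\frac{d}{dt}\Gamma[f(t)]=2\int_{\mathbb{R}^{2d}}x\cdot\big(v-\mathcal{V}_1(0)\big)f(t,dx,dv)
\end{align*}
by Cauchy--Schwarz against the \emph{uncentered} moment $\big(\int|x|^2f\big)^{1/2}=\big(\Gamma+|\mathcal{X}_1(t)|^2\big)^{1/2}$; since $\mathcal{X}_1(t)$ grows linearly this injects a factor behaving like $(B+At)/B$, and after the substitution $\Gamma^{1/2}+a$ the argument closes only if $\int_0^\infty\mathcal{G}^{1/2}(s)(1+s)\,ds<\infty$, i.e. $\frac{1}{2(\alpha-1)}>2$ --- which is exactly where $\alpha<5/4$ enters. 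You instead center the position as well, writing $\frac{d}{dt}\Gamma=2\int(x-\mathcal{X}_1(t))\cdot(v-\mathcal{V}_1(0))f\leq2\Gamma^{1/2}\mathcal{G}^{1/2}$ (the identity holds because $\mathcal{X}_1'=\mathcal{V}_1(0)$ and $\Gamma=\int|x-\mathcal{X}_1|^2f$), so $\frac{d}{dt}\Gamma^{1/2}\leq\mathcal{G}^{1/2}$ and boundedness of $\Gamma$ needs only $\mathcal{G}^{1/2}\in L^1(0,\infty)$, i.e. $\frac{1}{2(\alpha-1)}>1$. This is both shorter and strictly stronger: it yields \eqref{estGamma} for the full range $\alpha\in[1,3/2)$, so the restriction to $\alpha<5/4$ in this lemma (and the one it propagates to Lemma \ref{lemRt} and Theorem \ref{T1}, which use \eqref{estGamma} only) is an artifact of the paper's uncentered estimate rather than of the problem. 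The only detail worth writing out is the differentiation of $\Gamma^{1/2}$ at points where $\Gamma$ vanishes (use $\sqrt{\Gamma+\varepsilon}$ or a Dini derivative); otherwise your argument is complete.
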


\begin{proof}
From Lemma \ref{lemderivatives}, we get that $\mathcal {G}[f(t)]=\int_{\mathbb{R}^{2d}}|v-\mathcal{V}_1(0)|^2f(t,dx,dv)$ via the following computations
\begin{align*}
\mathcal {G}[f(t)]
=&\int_{\mathbb{R}^{2d}}|v|^2f(dx,dv)-\big(\mathcal{V}_1(0)\big)^2\\
=&\int_{\mathbb{R}^{2d}}|v|^2f(dx,dv)-2{\mathcal{V}_1(0)}\int_{\mathbb{R}^{2d}}vf(dx,dv)+\big(\mathcal{V}_1(0)\big)^2\\
=&\int_{\mathbb{R}^{2d}}\big|v^2-2\mathcal{V}_1(0)v+\big(\mathcal{V}_1(0)\big)^2\big|f(dx,dv)\\
=&\int_{\mathbb{R}^{2d}}|v-{\mathcal{V}_1(0)}|^2f(dx,dv).
\end{align*}
Using Lemma \ref{lemderivatives} and assumptions \eqref{assum2} \eqref{assum3}, we have
\begin{align*}
&\frac{d}{dt}\mathcal {G}[f(t)]
=\frac{d}{dt}\int_{\mathbb{R}^{2d}}|v|^2f(dx,dv)\\
=& -\int_{\mathbb{R}^{4d}}\Phi(|x-y|)G(v-v_{*})\cdot(v-v_{*})f(t,dx,dv)f(t,dy,dv_{*})\\
& +\displaystyle\mathcal {G}^\frac{2\alpha-1}{2}[f(t)]\int_{\mathbb{R}^{4d}}F(|x-y|^2)(v-v_{*})\cdot(x-y)f(t,dx,dv)f(t,dy,dv_{*})\\
\leq&-\Phi^{*}G^*\int_{\mathbb{R}^{4d}}|v-v_{*}|^{2\alpha}f(t,dx,dv)f(t,dy,dv_{*})\\
& +F^* \displaystyle\mathcal {G}^\frac{2\alpha-1}{2}[f(t)]\int_{\mathbb{R}^{4d}}|v-v_{*}|f(t,dx,dv)f(t,dy,dv_{*})\\
\leq&-\Phi^{*}G^*\Big(\int_{\mathbb{R}^{4d}}|v-v_{*}|^2f(t,dx,dv)f(t,dy,dv_{*})\Big)^\alpha\\
& +F^* \displaystyle\mathcal {G}^\frac{2\alpha-1}{2}[f(t)]\Big(\int_{\mathbb{R}^{4d}}|v-v_{*}|^2f(t,dx,dv)f(t,dy,dv_{*})\Big)^\frac{1}{2}.
\end{align*}
Notice that
\begin{eqnarray*}
2\mathcal {G}[f(t)] =\int_{\mathbb{R}^{4d}}|v-v_{*}|^2f(t,dx,dv)f(t,dy,dv_{*}).
\end{eqnarray*}
Therefore, the inequality for $\mathcal {G}[f(t)] $ is
\begin{eqnarray*}
\frac{d}{dt}\mathcal {G}[f(t)] \leq  -(2^\alpha \Phi^* G^* -\sqrt{2} F^*) \mathcal {G}^\alpha [f(t)] :=-C^*\mathcal {G}^\alpha [f(t)],
\end{eqnarray*}
which implies that $\mathcal {G}[f(t)]$ decay in time. If $\mathcal {G}[f_0]=0$, then we always have $\mathcal {G}[f(t)]=0$. Otherwise, in the case of $\alpha=1$, we have
\begin{eqnarray*}
\mathcal {G}[f(t)]\leq\mathcal {G}[f_0]e^{-C^{*}t}.
\end{eqnarray*}
In the case of $1<\alpha<3/2$, we have
\begin{eqnarray*}
\mathcal {G}[f(t)]\leq\big(\mathcal {G}^{1-\alpha}[f_0] +(\alpha-1)C^*t \big)^{-\frac{1}{\alpha-1}} :=(B+At)^{-\frac{1}{\alpha-1}}.
\end{eqnarray*}
Hence, \eqref{estG} is proved.

Using Lemma \ref{lemderivatives}
\begin{align*}
\frac{d}{dt}\Gamma[f(t)]
=&\frac{d}{dt}\Big[\int_{\mathbb{R}^{2d}}|x|^2f(dx,dv)-\big(\int_{\mathbb{R}^{2d}}xf(dx,dv)\big)^2\Big]\\
=&2\int_{\mathbb{R}^{2d}}x\big(v-\mathcal{V}_1(0)\big)f(t,dx,dv)\\
\leq&\Big[\int_{\mathbb{R}^{2d}}|v-\mathcal{V}_1(0)|^2f(t,dx,dv)\Big]^{\frac{1}{2}}\Big[\int_{\mathbb{R}^{2d}}|x|^2f(t,dx,dv)\Big]^{\frac{1}{2}}\\
=&\mathcal {G}^{\frac{1}{2}}[f(t)]\Big[\Gamma[f(t)]+\big(\mathcal{V}_1(0)t+\int_{\mathbb{R}^{2d}}x f_0(dx,dv)\big)^2\Big]^{\frac{1}{2}}\\
\leq&\mathcal {G}^{\frac{1}{2}}[f(t)]\Big[\Gamma^{\frac{1}{2}}[f(t)]+\mid\mathcal{V}_1(0)\mid t+\big|\int_{\mathbb{R}^{2d}}x f_0(dx,dv)\big|\Big].
\end{align*}
If $\mathcal {G}[f_0]=0$, then $\frac{d}{dt}\Gamma[f(t)]\leq0$ which means that $\Gamma[f(t)]\leq \Gamma[f_0]$. If $\mathcal {G}[f_0]\neq 0$, then $B\neq 0$, and we have
\begin{eqnarray*}
\frac{d}{dt}\Gamma[f(t)]
&\leq & \mathcal {G}^{\frac{1}{2}}[f(t)]\dfrac{B +At}{B}\Big[\Gamma^{\frac{1}{2}}[f(t)]+\mid\mathcal{V}_1(0)\mid \dfrac{B}{A}+\big|\int_{\mathbb{R}^{2d}}x f_0(dx,dv)\big|\Big]
\end{eqnarray*}
Let $\frac{B}{A}\mid\mathcal{V}_1(0)\mid+\big|\displaystyle\int_{\mathbb{R}^{2d}}x f_0(dx,dv)\big|=a$, we get
$$
\frac{\frac{d}{dt}\Gamma[f(t)]}{\Gamma^{\frac{1}{2}}[f(t)]+a}\leq \mathcal {G}^{\frac{1}{2}}[f(t)]\dfrac{B +At}{B}
$$
Therefore, when $1\leq \alpha<\frac{5}{4}$, by using \eqref{estG}, we have
\begin{eqnarray*}
&&\Gamma^{\frac{1}{2}}[f(t)]-a \ln\big(\Gamma^{\frac{1}{2}}[f(t)]+a\big)\\
&\leq&\Gamma^{\frac{1}{2}}[f_0]-a\ln\big(\Gamma^{\frac{1}{2}}[f_0]+a\big)+\frac{1}{2}\displaystyle\int^t_0\mathcal {G}^{\frac{1}{2}}[f(s)]\dfrac{B +As}{B}ds
\leq C,
\end{eqnarray*}
which implies there exists a constant $C>0$ independent of $t$ such that
\begin{eqnarray*}
\Gamma[f(t)] \leq C.
\end{eqnarray*}
\end{proof}

\subsection{Estimates on characteristics and global existence}
In this part, we finish the proof of global existence by estimating the characteristics and the growing speed of compact support.

\begin{lemma} \label{lemRt}
If supp$f_0\subset B_{R_0}$ and the assumptions \eqref{assum1}-\eqref{assum4} hold, $\alpha\in [1,5/4)$, let $f(t,x,v)$ be the measure valued solution of \eqref{kineticequs} within $[0,T^*)$, then ${\rm supp}f(t,\cdot,\cdot)\subset B_{R(t)}$ for $0\leq t<T^*$ with
\begin{eqnarray*}
R(t)\leq R_0e^{Ct}+C(e^{Ct}-1)^\frac{1}{2},
\end{eqnarray*}
where $C$ depends only on $R_0$.
\end{lemma}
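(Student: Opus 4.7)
The plan is to convert the a priori estimates of Lemma \ref{lemflocking} into a pointwise bound on $H_{[f]}$ along characteristics and then close an ODE inequality by Gronwall.

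First, I would use Lemma \ref{lemderivatives} together with Cauchy--Schwarz to convert the second-moment control of Lemma \ref{lemflocking} into first-moment bounds:
\begin{align*}
\int_{\mathbb{R}^{2d}}|v|\,f(t,dx,dv) &\leq \big(\mathcal{G}[f(t)]+|\mathcal{V}_1(0)|^2\big)^{1/2}\leq C,\\
\int_{\mathbb{R}^{2d}}|x|\,f(t,dx,dv) &\leq \big(\Gamma[f(t)]+|\mathcal{X}_1(t)|^2\big)^{1/2}\leq C(1+t),
\end{align*}
where the constants depend only on $R_0$. This is the place where the restriction $\alpha\in[1,5/4)$ enters, through the uniform bound on $\Gamma[f(t)]$, and the linear factor $(1+t)$ comes from $\mathcal{X}_1(t)=\mathcal{V}_1(0)\,t+\mathcal{X}_1(0)$.

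Next, I would plug these moment bounds into assumption \eqref{assum4}. For any $(X,V)\in\mathrm{supp}\,f(t)$,
\begin{align*}
\bigl|[\Phi(|x|)G(v)]*f(X,V)\bigr|
&\leq C\!\int\!\bigl(1+|X-y|+|V-w|\bigr)f(dy,dw)\\
&\leq C(1+t)+C|X|+C|V|.
\end{align*}
The second piece of $H_{[f]}$ is controlled pointwise by $\mathcal{G}^{(2\alpha-1)/2}[f_0]\cdot F^{*}$ thanks to assumption \eqref{assum3} and the monotone decay of $\mathcal{G}[f(t)]$ from Lemma \ref{lemflocking}. Summing up,
$$\bigl|H_{[f]}(t,X,V)\bigr|\leq C(1+t)+C|X|+C|V|,\qquad C=C(R_0).$$

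Finally, along a characteristic $(X(t),V(t))$ issuing from $(x_0,v_0)\in B_{R_0}$ I would analyze $|\dot X|=|V|$ and $|\dot V|\leq|H_{[f]}|$ by treating $|X|^2$ and $|V|^2$ separately. Young's inequality $2X\cdot V\leq C|X|^2+C^{-1}|V|^2$ decouples them, so $|V|^2$ obeys an inequality of the form $\tfrac{d}{dt}|V|^2\leq C|V|^2+C$, and $|X|^2$ inherits from it an equation $\tfrac{d}{dt}|X|^2\leq C|X|^2+C$ once $|V|^2$ is controlled. Gronwall then gives $|X(t)|^2+|V(t)|^2\leq R_0^2 e^{Ct}+C(e^{Ct}-1)$, and $\sqrt{a+b}\leq\sqrt{a}+\sqrt{b}$ yields the announced envelope $R(t)\leq R_0 e^{Ct}+C(e^{Ct}-1)^{1/2}$.

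The main obstacle is precisely the square-root growth of the error term: a naive Gronwall on $|X|+|V|$ using the bound $|H_{[f]}|\leq C(1+t)+C(|X|+|V|)$ produces a weaker correction of order $(1+t)(e^{Ct}-1)$. Reaching the sharper $(e^{Ct}-1)^{1/2}$ requires the $|X|^2$--$|V|^2$ splitting above, and in turn relies critically on the fact that $\mathcal{G}[f(t)]$ and $\Gamma[f(t)]$ do not contribute an extra polynomial-in-$t$ factor inside the forcing, which is exactly what Lemma \ref{lemflocking} secures for $\alpha\in[1,5/4)$.
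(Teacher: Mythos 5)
Your proposal follows the paper's proof almost exactly: the same first--moment bounds $\int|v|f\le C$ and $\int|x|f\le C(1+t)$ extracted from Lemmas \ref{lemderivatives} and \ref{lemflocking} (with $\alpha\in[1,5/4)$ entering through the uniform bound on $\Gamma[f(t)]$), the same linear bound $|H_{[f]}(t,X,V)|\le C(1+|X|+|V|+t)$ via assumption \eqref{assum4}, and the same Gronwall argument on the squared quantities. One intermediate claim is mis-stated, however: the system does not decouple the way you describe. Since $H_{[f]}$ carries both $|X|$ and $(1+t)$, the velocity inequality is $\frac{d}{dt}|V|^2\le C\big(|X|^2+|V|^2\big)+C(1+t)^2$ rather than $\frac{d}{dt}|V|^2\le C|V|^2+C$, so you cannot close the $|V|^2$ estimate first and feed it into $|X|^2$ afterwards. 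The paper keeps the coupled system \eqref{equs} and applies Gronwall directly to the sum $X^2+V^2$; the resulting forcing $\int_0^t e^{C(t-s)}(1+s)^2\,ds$ is still $O(e^{Ct}-1)$ after absorbing $(1+s)^2$ into $e^{Cs/2}$ at the cost of enlarging $C$, so the announced envelope $R(t)\le R_0e^{Ct}+C(e^{Ct}-1)^{1/2}$ is unaffected. With that one step repaired, your argument coincides with the paper's.
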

\begin{proof} We will use within this proof $C$ to be universal constant depends only on $R_0$.

According to the a priori estimates we have obtained before, i.e. Lemma \ref{lemderivatives} and Lemma \ref{lemflocking}, we have
\begin{eqnarray}
\nonumber&\mathcal {V}_2(t):=\int_{\mathbb{R}^{2d}}|v|^2f(t,dx,dv)\leq\mathcal {V}_2(0);&\\
\label{allest}&\mathcal {X}_2(t):=\int_{\mathbb{R}^{2d}}|x|^2 f(t,dx,dv)\leq C+\big(\int_{\mathbb{R}^{2d}}x f(t,dx,dv)\big)^2;&\\
\nonumber &\int_{\mathbb{R}^{2d}}|v|f(t,dx,dv)\leq\Big(\int_{\mathbb{R}^{2d}}|v|^2f(t,dx,dv)\Big)^{\frac{1}{2}}\leq\mathcal {V}^{\frac{1}{2}} _2(0);&\\
\nonumber &\int_{\mathbb{R}^{2d}}|x| f(t,dx,dv)\leq\Big(\int_{\mathbb{R}^{2d}}|x|^2 f(t,dx,dv)\Big)^{\frac{1}{2}} \leq C+\big|\mathcal {V}_1(0)\big|t,&
\end{eqnarray}
where C depends on $\mathcal {X}_1(0)$ and $\mathcal {X}_2(0)$, thus only depends on $R_0$.

Now we consider the ODE system for characteristics
\begin{equation}\label{odecharac}
  \left\{\begin{array}{l}
  \frac{d}{dt}X=V,\\
  \frac{d}{dt}V=H_{[f]}(t,X,V).
  \end{array}\right.
\end{equation}

With the help of assumption \eqref{assum4} and all the estimates in \eqref{allest}, we have
\begin{align*}
&H_{[f]}(t,X,V)\\
\leq & C\Big(\int_{\mathbb{R}^{2d}}(|X-x|+|V-v|+1)f(dx,dv)+\big(\int_{\mathbb{R}^{2d}}\!\!|v|^2f(dx,dv)\\
&\qquad -\big(\int_{\mathbb{R}^{2d}}\!\!vf(dx,dv)\big)^2\big)^\frac{2\alpha-1}{2}\Big)\\
\leq &C\Big(\big(\int_{\mathbb{R}^{2d}}(|x|^2+|v|^2)f(dx,dv)\big)^\frac{1}{2}+\mathcal{V}^{\frac{2\alpha-1}{2}} _2(0)+1+|X|+|V|\Big)\\
\leq& C\big(1+|X|+|V|+t\big),
\end{align*}
where the constant $C$ depends only on $\mathcal{X}_1(0),\mathcal{X}_2(0)$,$\mathcal{V}_1(0)$ and $\mathcal{V}_2(0)$, namely depending on $R_0$.

Therefore, the characteristic system gives the following estimates on the particle trajectory
\begin{equation}\label{equs}
  \left\{\begin{array}{l}
  \frac{d{X^2}}{dt}\leq \frac{1}{2}(X^2+V^2),\\
  \frac{d{V^2}}{dt}\leq C \big[(X^2+V^2)+(1+t)^2\big]
  \end{array}\right.
\end{equation}
Consequently,
$$
X^2(t)+V^2(t)\leq \Big(X^2(0)+V^2(0)\Big)e^{Ct}+C(e^{Ct}-1),
$$
where $C$ depends only on $R_0$, which means that the growth of characteristics is bounded in time.
\end{proof}

\begin{remark}\label{remonref3}
The technique used to prove the well posedness of the system in our article is very similar to the one used in the reference \cite{WELL}. However, $C$ depends only on $R_0$ is the biggest difference in estimating the characteristics, which is indispensable in proving the global existence, and this key step was missing in \cite{WELL}. In fact, in \cite{WELL} Lemma 3.11, the constant $C$ depends not only on $R_0$ but also on $C_E$ from Hypothesis 3.1, consequently $C$ depends on the first moment of measure valued solution $f$, i.e. $\int_{\mathbb{R}^{2d}}|x|f(t,dx,dv)$. Furthermore, we cannot get the result that the growth of characteristics is bounded without estimating the first moment, which means that the \emph{a priori} estimates we have obtained in Lemma \ref{lemderivatives} and Lemma \ref{lemflocking} are necessary.
\end{remark}

\section{Stability and mean field limit}
In this part, we will prove the stability of the measure valued solution. Moreover, a direct consequence of stability is the so called mean field limit of the complex system. More precisely, if the initial data is an approximation of distributions of finite number of particles, then as the number of particles goes to infinity, the distribution of these particle trajectories converges to the measure valued solution of the kinetic equation. We will use the Monge-Kantorovich-Rubinstein distance for the mean field limit.

We point out here that with all the estimates in hands we can follow similar arguments as in \cite{WELL} to derive stability and mean field limit. For completeness, we still give a full detailed proof in this section.

\subsection{Stability}

As a preparation, we first give a stability result on the characteristic system.
\begin{lemma} \label{l7}
Let $f$ be the measure valued solution of \eqref{kineticequs}. For any given $(x_{01},v_{01})$, $(x_{02},v_{02})\in B_{R_0}$, let the particle trajectories $\mathcal{T}^t_{H_{[f]}}(x_{01},v_{01}), ~\mathcal{T}^t_{H_{[f]}}(x_{02},v_{02})\in B_{R(t)}$, then
$$
|\mathcal{T}^t_{H_{[f]}}(x_{01},v_{01})-\mathcal{T}^t_{H_{[f]}}(x_{02},v_{02})|\leq|(x_{01},v_{01})-(x_{02},v_{02})| \exp{\hspace{-1mm}\big\{\int_0^t\hspace{-1mm}(L_{R(s)}+1)ds\big\}},
$$
where $L_{R(t)}$ is the Lipschitz constant of $H_{[f]}(t,X,V)$ in $B_{R(t)}$.
\end{lemma}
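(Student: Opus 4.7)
The plan is to set up an integral inequality for the difference of the two characteristic trajectories and then close the estimate via Grönwall's inequality. Write $\mathcal{T}^t_i := \mathcal{T}^t_{H_{[f]}}(x_{0i},v_{0i})$ for $i=1,2$. By the definition of the characteristic flow, each $\mathcal{T}^t_i$ solves $\tfrac{d}{dt}\mathcal{T}^t_i = \Psi_{H_{[f]}}(t,\mathcal{T}^t_i)$ where $\Psi_{H_{[f]}} = (v, H_{[f]})$ as introduced in Corollary~\ref{Fl}. Integrating in time and subtracting gives
\[
\mathcal{T}^t_1 - \mathcal{T}^t_2 \;=\; (x_{01}-x_{02},\,v_{01}-v_{02}) \;+\; \int_0^t \bigl(\Psi_{H_{[f]}}(s,\mathcal{T}^s_1)-\Psi_{H_{[f]}}(s,\mathcal{T}^s_2)\bigr)\,ds.
\]

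The next step is to invoke the local Lipschitz property of $\Psi_{H_{[f]}}$ from Corollary~\ref{Fl}(ii): since by hypothesis both $\mathcal{T}^s_1$ and $\mathcal{T}^s_2$ lie in $B_{R(s)}$ for every $s\in[0,t]$, we have
\[
\bigl|\Psi_{H_{[f]}}(s,\mathcal{T}^s_1)-\Psi_{H_{[f]}}(s,\mathcal{T}^s_2)\bigr| \;\le\; \bigl(1+L_{R(s)}\bigr)\,\bigl|\mathcal{T}^s_1-\mathcal{T}^s_2\bigr|.
\]
Taking norms in the previous display and inserting this pointwise Lipschitz bound yields the integral inequality
\[
\bigl|\mathcal{T}^t_1-\mathcal{T}^t_2\bigr| \;\le\; \bigl|(x_{01},v_{01})-(x_{02},v_{02})\bigr| \;+\; \int_0^t \bigl(1+L_{R(s)}\bigr)\,\bigl|\mathcal{T}^s_1-\mathcal{T}^s_2\bigr|\,ds.
\]

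The conclusion then follows from Grönwall's inequality in its time-dependent form, producing exactly the factor $\exp\{\int_0^t (L_{R(s)}+1)\,ds\}$ on the initial distance. There is no genuine obstacle here: the only subtle point is that the Lipschitz constant $L_{R(s)}$ depends on $s$ through the (growing) radius $R(s)$ from Lemma~\ref{lemRt}, but this causes no difficulty for Grönwall provided $s\mapsto L_{R(s)}$ is locally integrable, which is automatic since $R(s)$ is continuous and $L_R$ is monotone in $R$. Thus the lemma reduces to a clean application of the local Lipschitz bound already established for $\Psi_{H_{[f]}}$ and a standard Grönwall argument.
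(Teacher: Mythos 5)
Your proof is correct and follows essentially the same route as the paper: write the characteristics in integral form, apply the local Lipschitz bound for $\Psi_{H_{[f]}}$ from Corollary~\ref{Fl}(ii) on $B_{R(s)}$, and close with Gr\"onwall. Your explicit remark on the integrability of $s\mapsto L_{R(s)}$ is a small point of added care that the paper's proof leaves implicit.
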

\begin{proof} The ODE system for characteristic is
\begin{equation}\nonumber
  \left\{\begin{array}{ll}
   \frac{d}{dt}\mathcal{T}^t_{H_{[f]}}(x_{0i},v_{0i})=\Psi_{H_{[f]}}\big(t,x_i(t),v_i(t)\big),&\\
   \big(x_i(0),v_i(0)\big)=(x_{0i},v_{0i}),&i=1,2.
  \end{array}\right.
\end{equation}
By using Corollary \ref{Fl}, we get
\begin{align*}
~&\big|\mathcal{T}^t_{H_{[f]}}(x_{01},v_{01})-\mathcal{T}^t_{H_{[f]}}(x_{02},v_{02})\big|\\
\leq&\big|(x_{01},v_{01})-(x_{02},v_{02})\big|+\int_0^t\big|\psi_{H_{[f]}}\big(s,x_1(s),v_1(s)\big)-\psi_{H_{[f]}}\big(s,x_1(s),v_1(s)\big)\big|ds\\
\leq&\big|(x_{01},v_{01})-(x_{02},v_{02})\big|+\int_0^t (L_R(H_{[f]})+1)\big|\big(x_1(s),v_1(s)\big)-\big(x_1(s),v_1(s)\big)\big|ds,
\end{align*}
Then the result will be obtained directly by using Gronwall's Lemma.
\end{proof}

Now we can prove Theorem \ref{T2} in the following.
\begin{proof}
Without loss of generality, we can assume that ${\rm supp} f_0\bigcup {\rm supp}g_0 \in B_{R_0}$. Then from Lemma \ref{lemRt}, we know that
$$
{\rm supp} f(t,\cdot,\cdot)\cup {\rm supp}g(t,\cdot,\cdot) \in B_{R(t)}.
$$
Then the distance between $f$ and $g$ can be estimated by
\begin{eqnarray*}
\mathcal {W}_1(f,g)&=&\mathcal {W}_1({\mathcal{T}}^t_{H_{[f]}}\#f_0,{\mathcal{T}}^t_{H_{[g]}}\#g_0)\\
&\leq&\mathcal {W}_1({\mathcal{T}}^t_{H_{[f]}}\#f_0,{\mathcal{T}}^t_{H_{[f]}}\#g_0)+\mathcal {W}_1({\mathcal{T}}^t_{H_{[f]}}\#g_0,{\mathcal{T}}^t_{H_{[g]}}\#g_0).
\end{eqnarray*}
By using Lemma \ref{l4} and \ref{l7}, we have
\begin{eqnarray*}
&&\mathcal {W}_1({\mathcal{T}}^t_{H_{[f]}}\#f_0,{\mathcal{T}}^t_{H_{[f]}}\#g_0)\leq \exp{\big\{\int_0^t(L_{R(s)}+1)ds\big\}}\mathcal {W}_1(f_0,g_0),
\end{eqnarray*}
and by using Lemma \ref{l4}, \ref{l5} and Lemma \ref{p1}, we have
\begin{eqnarray*}
\mathcal {W}_1({\mathcal{T}}^t_{H_{[f]}}\#g_0,{\mathcal{T}}^t_{H_{[g]}}\#g_0)
&\leq&\|{\mathcal{T}}^t_{H_{[f]}}-{\mathcal{T}}^t_{H_{[g]}}\|_{{L^\infty}(B_{R_0})}\\
&\leq&\int_0^t e^{(L_{R(s)}+1)(t-s)}\|H_{[f]}-H_{[g]}\|_{{L^\infty}(B_{R(s)})}ds\\
&\leq&C_{R(t)}\int_0^t e^{(L_{R(s)}+1)(t-s)}\mathcal {W}_1(f(s,\cdot,\cdot),g(s,\cdot,\cdot))ds.
\end{eqnarray*}
By Gronwall's Lemma, there is a constant $\lambda(t)$ such that
\begin{eqnarray}\label{sta}
\mathcal {W}_1(f,g)\leq \lambda(t)\mathcal{W}_1(f_0,g_0).
\end{eqnarray}
\end{proof}

\subsection{Derivation of the mean-field limit(a particle method)}
For any given $f_0\in\mathcal{P}(\mathbb{R}^{d}\times\mathbb{R}^{d})$ with compact support in $B_{R_0}$, there exists
\begin{eqnarray}\label{mu0}
\mu_0^N=\frac{1}{N}\sum\limits_{i=1}^N \delta\big(x-x_i(0)\big)\otimes\delta\big(v-v_i(0)\big)
\end{eqnarray}
with a sequence of particles, $\{\big(x_i(0)$, $v_i(0)\big)\}_{i=1}^N$, such that,
$$
\lim\limits_{N\rightarrow\infty}\mathcal{W}_1(\mu_0^N,f_0)=0.
$$
From the global existence and uniqueness we obtained in the previous section, there exists a measure valued solution of \eqref{kineticequs} with initial data $\mu_0^N$. Moreover,
\begin{eqnarray}\label{mut}
\mu_t^N=\frac{1}{N}\sum\limits_{i=1}^N \delta\big(x-x_i(t)\big) \otimes\delta\big(v-v_i(t)\big),
\end{eqnarray}
where $x_i(t)$, $v_i(t)$ is a solution of
$$
  \left\{\begin{array}{l}
  x_i'(t)=v_i(t),\\
  v_i'(t)=\dfrac{1}{N}\sum\limits_{j=1}^N \Phi(|x_i-x_j|)G(v_j-v_i)+\dfrac{\Lambda(v)^{2\alpha-1}}{N}\sum\limits_{j=1}^NF(|x_i-x_j|^2)(x_i-x_j),
  \end{array}\right.
$$
where
$\Lambda(v)=\frac{1}{N}(\sum\limits_{i>j}(|v_i-v_j|^2))^{\frac{1}{2}}$, $\mu_t^N$ is a measure valued solution of \eqref{kineticequs} with initial data $\mu_0^N$.

On the other hand, we know that \eqref{kineticequs} has a unique solution $f$ with initial data $f_0$. Then the mean filed limit is directly a corollary from the stability result.
\begin{corollary}\label{cormeanf}
Assume $f_0$ has compact support in $B_{R_0}$, $\mu_0^N$, $\mu_t^N$ defined in \eqref{mu0} and \eqref{mut}, then
$$
\lim\limits_{N\rightarrow\infty}\mathcal{W}_1(\mu_t^N,f(t,x,v))=0\mbox{ for all }t\geq0,
$$
where $f(t,x,v)$ is the unique measure valued solution to equation \eqref{kineticequs} with initial data $f_0$.
\end{corollary}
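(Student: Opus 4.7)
The plan is to deduce the mean-field limit directly from the stability estimate of Theorem \ref{T2}, after checking that the empirical measure $\mu_t^N$ is itself a measure valued solution of \eqref{kineticequs} in the sense of Definition \ref{defsolution}.

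First I would verify this consistency. Substituting $\mu_t^N = \frac{1}{N}\sum_i \delta_{x_i(t)}\otimes \delta_{v_i(t)}$ into the definition of $H_{[\mu^N]}$, the convolution $[\Phi(|x|)G(v)]\ast \mu_t^N$ evaluated at $(x_i,v_i)$ collapses to the discrete sum $\frac{1}{N}\sum_j \Phi(|x_i-x_j|)G(v_i-v_j)$, and the prefactor $\mathcal{G}^{(2\alpha-1)/2}[\mu_t^N]$ reduces to $\Lambda(v)^{2\alpha-1}$ via the identity
$$
\frac{1}{N}\sum_i |v_i|^2 - \Big|\frac{1}{N}\sum_i v_i\Big|^2 = \frac{1}{2N^2}\sum_{i,j}|v_i-v_j|^2 = \Lambda(v)^2.
$$
Consequently $H_{[\mu^N]}(t,x_i(t),v_i(t))$ agrees (up to the correct sign) with the right-hand side of the second equation of \eqref{ode1equs}, so the characteristic flow associated with $H_{[\mu^N]}$ transports $(x_i(0),v_i(0))$ to $(x_i(t),v_i(t))$, and the pushforward relation $\mu_t^N = \mathcal{T}^t_{H_{[\mu^N]}}\#\mu_0^N$ holds. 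By the global existence and uniqueness already established, $\mu_t^N$ is then \emph{the} measure valued solution issuing from $\mu_0^N$.

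Next, because $f_0$ is supported in $B_{R_0}$, I would arrange the approximating particles so that $\{(x_i(0),v_i(0))\}_{i=1}^N \subset B_{R_0}$ (e.g.\ by quantizing the support of $f_0$), so that $f_0$ and $\mu_0^N$ are supported in one common ball whose radius is independent of $N$. Theorem \ref{T2} then supplies a single increasing function $\lambda(t)$, independent of $N$, with
$$
\mathcal{W}_1(\mu_t^N, f(t,x,v)) \leq \lambda(t)\,\mathcal{W}_1(\mu_0^N, f_0)\qquad\text{for all }t\geq 0.
$$
Since $\mathcal{W}_1(\mu_0^N,f_0)\to 0$ by the choice of $\mu_0^N$, sending $N\to\infty$ at any fixed $t$ yields the claim.

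The only step requiring genuine attention is the algebraic identification $\mathcal{G}[\mu_t^N]=\Lambda(v)^2$, which is what makes the kinetic repelling term restrict cleanly to the discrete repelling term in \eqref{ode1equs}; once this and the convolution reduction are in place, the corollary is an immediate consequence of the stability bound proved in the previous subsection.
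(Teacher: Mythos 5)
Your proposal is correct and follows essentially the same route as the paper: identify the empirical measure $\mu_t^N$ as the measure valued solution issuing from $\mu_0^N$ and then invoke the stability estimate $\mathcal{W}_1(\mu_t^N,f)\le\lambda(t)\mathcal{W}_1(\mu_0^N,f_0)$ from Theorem \ref{T2}. Your explicit verification of the identity $\mathcal{G}[\mu_t^N]=\Lambda(v)^2$ and of the reduction of the convolutions to the discrete sums in \eqref{ode1equs} is a detail the paper leaves implicit, but it is exactly the consistency check underlying the paper's argument.
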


\begin{remark}
The mean field limit via particle method is explained in the following diagram.
\begin{eqnarray*}
{\mu_0^N} & \stackrel{ N\rightarrow\infty}{\longrightarrow} & { f_0} \\
{\mathcal{T}}^t_{H_{[\mu_t^N]}}\#\mu_0^N \downarrow &&\downarrow {\mathcal{T}}^t_{H_{[f]}}\#f_0  \\
{\mu_t^N } & \stackrel{ N\rightarrow\infty}{\longrightarrow} & {f(t,x,v)}
\end{eqnarray*}
The stability result from \eqref{sta} implies that
$$
\mathcal{W}_1(f(t,x,v),\mu_t^N)\leq\lambda(t)\mathcal{W}_1(f_0,\mu_0^N).
$$
The mean field limit is accomplished by taking $N\rightarrow\infty$.
\end{remark}

\section{Time-Asymptotic behavior of kinetic flocking }
The global measure valued solution has long time flocking behavior. From microscopic point of view, flocking means that in the long time, all particles will go in a group together with the same velocity. The corresponding quantities to describe this phenomenon is $\mathcal{G}[f]$ and $\Gamma[f]$.
\begin{lemma}[Long time flocking]\label{lemlongtimeflocking}
Under the assumptions \eqref{assum2} \eqref{assum3}, $\alpha\in[1,5/4)$, suppose that $f\in C\big([0,+\infty);\mathcal{P}_2(\mathbb{R}^d\times \mathbb{R}^d)\big)$ is a measure valued solution of equation \eqref{kineticequs}, then
$$
\displaystyle\lim_{t\rightarrow +\infty}\int_{\mathbb{R}^{2d}}|v-\mathcal{V}_1(0)|^2f(t,dx,dv)=0,
$$
where $\mathcal{V}_1(0)$ is the initial group velocity $\displaystyle\int_{\mathbb{R}^{2d}}vf_0(dx,dv)$, and
$$
\sup\limits_{0\leq t<\infty}\Big(\int_{\mathbb{R}^{2d}}|x|^2f(dx,dv)-\big(\int_{\mathbb{R}^{2d}}xf(dx,dv)\big)^2\Big)<C_{R_0}.
$$
\end{lemma}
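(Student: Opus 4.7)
The plan is to observe that this long-time flocking lemma is essentially a direct corollary of the a priori estimates already obtained in Lemma \ref{lemflocking}, so the proof should reduce to invoking those bounds and taking $t\to\infty$.

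First I would recall from the proof of Lemma \ref{lemflocking} the identity
\[
\mathcal {G}[f(t)] \;=\; \int_{\mathbb{R}^{2d}}|v-\mathcal{V}_1(0)|^2 f(t,dx,dv),
\]
which follows from expanding the square together with the conservation of the first velocity moment $\mathcal{V}_1(t)\equiv \mathcal{V}_1(0)$ established in Lemma \ref{lemderivatives}. Hence to prove the first assertion it suffices to show that $\mathcal{G}[f(t)]\to 0$ as $t\to\infty$.

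Next I would split into the two regimes handled by the a priori estimate \eqref{estG}. In the case $\alpha=1$, the bound $\mathcal{G}[f(t)]\leq\mathcal{G}[f_0]e^{-C^\ast t}$ gives exponential decay, so the limit is zero. In the case $1<\alpha<3/2$ (and in particular for $\alpha\in[1,5/4)$), the bound
\[
\mathcal{G}[f(t)]\leq \bigl(\mathcal{G}^{1-\alpha}[f_0]+(\alpha-1)C^\ast t\bigr)^{-\frac{1}{\alpha-1}}
\]
gives algebraic decay to zero (and if $\mathcal{G}[f_0]=0$ there is nothing to prove, since $\mathcal{G}[f(t)]$ stays identically zero, as noted in the proof of Lemma \ref{lemflocking}). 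Passing to the limit $t\to\infty$ in either case then yields the first statement.

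For the second assertion, I would simply invoke the uniform-in-time bound \eqref{estGamma} of Lemma \ref{lemflocking}, which was proved precisely in the range $\alpha\in[1,5/4)$: there exists a constant $C$, depending only on the initial data (hence on $R_0$, since $\mathrm{supp}\,f_0\subset B_{R_0}$ controls $\mathcal{X}_1(0)$, $\mathcal{X}_2(0)$, $\mathcal{V}_1(0)$, $\mathcal{V}_2(0)$), such that $\Gamma[f(t)]\leq C_{R_0}$ for all $t\geq 0$. Taking the supremum in $t$ gives exactly the claimed bound. Since both conclusions have already been reduced to quantitative statements proved in Lemma \ref{lemflocking}, there is no genuine obstacle here; the only thing to verify carefully is that the threshold $\alpha<5/4$ used in the second bound is consistent with the hypothesis of the present lemma, which it is.
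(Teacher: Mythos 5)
Your proposal is correct and follows exactly the paper's route: the paper itself dispatches this lemma with the single remark that it is a direct consequence of the a priori estimates in Lemma \ref{lemflocking}, and your write-up simply makes explicit the identity $\mathcal{G}[f(t)]=\int_{\mathbb{R}^{2d}}|v-\mathcal{V}_1(0)|^2f(t,dx,dv)$, the decay bound \eqref{estG}, and the uniform bound \eqref{estGamma}.
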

It is a direct consequence of the previous a priori estimates in Lemma \ref{lemflocking}.

Another asymptotic behavior of the solution is that there is no aggregation in the long time, which can be described by the $L^\infty$ estimate of the probability density and a constant velocity as time goes to infinity.

\begin{lemma} Under the same assumption as in Lemma \ref{lemlongtimeflocking}, we have
$$
\displaystyle\lim_{t\rightarrow +\infty} \mathcal{W}_1(f(t,x,v),\rho(t,x)\delta (v-\mathcal{V}_1(0))) =0.
$$
\end{lemma}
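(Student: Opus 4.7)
\medskip

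The plan is to exploit the variational (coupling) characterization of the Monge--Kantorovich--Rubinstein distance together with the decay of $\mathcal{G}[f(t)]$ already established in Lemma \ref{lemflocking}. Concretely, I would first construct an explicit transport plan $\pi$ between $f(t,\cdot,\cdot)$ and $\rho(t,\cdot)\otimes\delta_{\mathcal{V}_1(0)}$ by pushing forward $f$ under the map
\begin{equation*}
(x,v)\longmapsto\bigl((x,v),\,(x,\mathcal{V}_1(0))\bigr).
\end{equation*}
Since the first marginal of this plan is $f$ and the second marginal is, by definition of $\rho$, exactly $\rho(t,x)\delta(v-\mathcal{V}_1(0))$, it is an admissible coupling.

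Next I would use this coupling in the primal definition of $\mathcal{W}_1$ to obtain
\begin{equation*}
\mathcal{W}_1\bigl(f(t,x,v),\,\rho(t,x)\delta(v-\mathcal{V}_1(0))\bigr)\;\leq\;\int_{\mathbb{R}^{2d}}\bigl|(x,v)-(x,\mathcal{V}_1(0))\bigr|\,f(t,dx,dv)\;=\;\int_{\mathbb{R}^{2d}}|v-\mathcal{V}_1(0)|\,f(t,dx,dv).
\end{equation*}
A Cauchy--Schwarz step, using that $f(t,\cdot,\cdot)$ is a probability measure, then gives
\begin{equation*}
\int_{\mathbb{R}^{2d}}|v-\mathcal{V}_1(0)|\,f(t,dx,dv)\;\leq\;\Bigl(\int_{\mathbb{R}^{2d}}|v-\mathcal{V}_1(0)|^2\,f(t,dx,dv)\Bigr)^{1/2}\;=\;\mathcal{G}[f(t)]^{1/2},
\end{equation*}
where the last equality is the identity for $\mathcal{G}[f(t)]$ derived at the beginning of the proof of Lemma \ref{lemflocking} using the conservation of $\mathcal{V}_1(t)\equiv\mathcal{V}_1(0)$.

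Finally, I would invoke the decay estimate \eqref{estG} of Lemma \ref{lemflocking}, which asserts $\mathcal{G}[f(t)]\to 0$ as $t\to\infty$ (exponentially when $\alpha=1$ and at an algebraic rate when $1<\alpha<3/2$), to conclude that the right-hand side vanishes in the limit, yielding the desired convergence. No serious obstacle is anticipated: the only subtlety is the verification that the explicitly constructed product-type plan is indeed a valid coupling, which reduces to the definition $\rho(t,\cdot)=\int_{\mathbb{R}^d}f(t,\cdot,dv)$. The argument applies in the full parameter range $\alpha\in[1,5/4)$ covered by the main theorem, and in fact the whole bound
\begin{equation*}
\mathcal{W}_1\bigl(f(t,x,v),\,\rho(t,x)\delta(v-\mathcal{V}_1(0))\bigr)\leq\mathcal{G}[f(t)]^{1/2}
\end{equation*}
holds pointwise in $t$, giving a quantitative rate of convergence for free.
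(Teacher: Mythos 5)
Your proof is correct, but it takes a different route from the paper's. The paper does not work with $\mathcal{W}_1$ directly: it estimates the bounded Lipschitz distance $d\bigl(f(t,\cdot,\cdot),\rho(t,\cdot)\delta(v-\mathcal{V}_1(0))\bigr)$ by testing against $\varphi\in\mho$ and using $|\varphi(x,v)-\varphi(x,\mathcal{V}_1(0))|\leq |v-\mathcal{V}_1(0)|$, and then appeals to the equivalence of $d$ and $\mathcal{W}_1$ stated in the appendix. You instead bound $\mathcal{W}_1$ itself from above by exhibiting the explicit coupling $(x,v)\mapsto\bigl((x,v),(x,\mathcal{V}_1(0))\bigr)$, whose second marginal is indeed $\rho(t,x)\delta(v-\mathcal{V}_1(0))$ by the definition of $\rho$. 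Your version has two advantages. First, it is self-contained: the equivalence of $d$ and $\mathcal{W}_1$ carries constants that in general depend on the size of the supports, which here grow in time (Lemma \ref{lemRt}), so the paper's reduction deserves more care than it receives. Second, you correctly insert the Cauchy--Schwarz step passing from $\int|v-\mathcal{V}_1(0)|\,f$ to $\mathcal{G}[f(t)]^{1/2}$; the paper's displayed inequality jumps straight to $\int|v-\mathcal{V}_1(0)|^2 f$, which the Lipschitz bound alone does not give (a harmless slip, but a slip). Both arguments ultimately rest on the decay \eqref{estG} from Lemma \ref{lemflocking}, and your formulation yields the quantitative rate $\mathcal{W}_1\leq\mathcal{G}[f(t)]^{1/2}$ explicitly.
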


\begin{proof}
Instead of studying the $\mathcal{W}_1$ distance, we will estimate the bounded Lipschitz distance $d$ between $f(t,x,v)$ and $\rho(t,x)\delta (v-\mathcal{V}_1(0))$. The reader is referred to appendix for the definition of bounded Lipschitz distance $d$ and the equivalence of these two distances. For any $\varphi\in\mho$, ($\mho$ is the test function space in the definition of bounded Lipschitz distance $d$, see \eqref{mho}) we have
\begin{align*}
&\big|\int_{\mathbb{R}^{2d}}\varphi(x,v)f(t,dx,dv)-\int_{\mathbb{R}^{2d}}\varphi(x,v)\rho(t,dx)d \delta (v-\mathcal{V}_1(0))\big|\\
=&\big|\int_{\mathbb{R}^{2d}}\varphi(x,v)f(t,dx,dv)-\int_{\mathbb{R}^{d}}\varphi(x,\mathcal{V}_1(0))\rho(t,dx)\big|\\
=&\big|\int_{\mathbb{R}^{2d}}\big(\varphi(x,v)-\varphi(x,\mathcal{V}_1(0))\big)f(t,dx,dv)\big|\\
\leq&\int_{\mathbb{R}^{2d}}|v-\mathcal{V}_1(0)|^2f(t,dx,dv),
\end{align*}
which means that
$$
d(f(t,x,v),\rho(t,x)\delta (v-\mathcal{V}_1(0)))\leq\int_{\mathbb{R}^{2d}}|v-\mathcal{V}_1(0)|^2f(t,dx,dv).
$$
then the proof is ended by taking $t\rightarrow\infty$.
\end{proof}

\begin{lemma}
Assume the coupling function satisfies $G(v)=-G(-v)$ and $\nabla_v\!\cdot G(v)\geq0$ for any $v\in\mathbb{R}^d$, and
 $\Phi(|x|)\geq 0$. If $f$ is a classical solution to equation \eqref{kineticequs} decaying fast enough at infinity with initial data $f_0\in L^\infty\big({\mathbb{R}^d\times\mathbb{R}^d}\big)\cap L^1\big({\mathbb{R}^d\times\mathbb{R}^d}\big)$, then
$$
\big\|f(t,x,v)\big\|_{{L^\infty}({\mathbb{R}^d\times\mathbb{R}^d})}\leq C\|f_0\big\|_{{L^\infty}({\mathbb{R}^d\times\mathbb{R}^d})}.
$$
\end{lemma}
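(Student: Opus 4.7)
The plan is the classical method of characteristics adapted to this nonlinear transport equation. First I would expand the divergence in \eqref{kineticequs} to rewrite it in non-conservative form as
$$\partial_t f + v\cdot\nabla_x f + H_{[f]}\cdot\nabla_v f = -(\nabla_v\cdot H_{[f]})\,f.$$
Along the characteristic curves $(X(t),V(t))$ generated by $\dot X = V$, $\dot V = H_{[f]}(t,X,V)$ with initial point $(x_0,v_0)$ (these exist on all of $[0,+\infty)$ by the global existence result), this becomes a scalar linear ODE, giving the representation formula
$$f(t,X(t),V(t)) = f_0(x_0,v_0)\exp\Big(-\int_0^t (\nabla_v\cdot H_{[f]})(s,X(s),V(s))\,ds\Big).$$
Thus the whole problem reduces to a uniform upper bound on the exponent.

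Next I would compute the divergence directly from the defining formula. Since the second summand of $H_{[f]}$ depends on $x$ only, it contributes nothing to $\nabla_v\cdot H_{[f]}$, so that
$$\nabla_v\cdot H_{[f]}(t,x,v) = -\int_{\mathbb{R}^{2d}}\Phi(|x-y|)(\nabla_v\cdot G)(v-w)\,f(t,y,w)\,dy\,dw.$$
The assumptions $\Phi\geq 0$, $\nabla_v\cdot G\geq 0$ and $f\geq 0$ show at once the decisive structural fact $\nabla_v\cdot H_{[f]}\leq 0$. Combining this with $L^\infty$ bounds on $\Phi$ and on $\nabla_v\cdot G$ (available in the flocking setting) and with the conservation of mass $\|f(t)\|_{L^1}=\|f_0\|_{L^1}$ (deduced by integrating the conservative form of the equation and using the fast decay at infinity to discard boundary terms), I would arrive at the two-sided estimate
$$0\;\leq\;-\nabla_v\cdot H_{[f]}(t,x,v)\;\leq\; C_0 := \|\Phi\|_{L^\infty}\,\|\nabla_v\cdot G\|_{L^\infty}\,\|f_0\|_{L^1}.$$

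Feeding this back into the representation formula yields
$$f(t,X(t),V(t)) \;\leq\; f_0(x_0,v_0)\,e^{C_0 t}\;\leq\; \|f_0\|_{L^\infty}\,e^{C_0 t}$$
along every characteristic, and since the flow surjects onto $\mathbb{R}^d\times\mathbb{R}^d$ at time $t$, taking the supremum over $(x_0,v_0)$ delivers the claim with $C = C(t) = e^{C_0 t}$. The main obstacle I anticipate lies precisely in the quantitative control of $\nabla_v\cdot H_{[f]}$: the sign $\nabla_v\cdot H_{[f]}\leq 0$ is free, but the matching pointwise \emph{upper} bound $|\nabla_v\cdot H_{[f]}|\leq C_0$—which is what actually produces the $L^\infty$ estimate rather than merely a lower bound $f\geq f_0$ along trajectories—forces one to invoke boundedness of $\Phi$ and $\nabla_v\cdot G$ together with the mass-conservation identity, and this is the only place where the ``fast enough decay'' hypothesis on the classical solution is used.
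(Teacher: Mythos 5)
Your route is genuinely different from the paper's. The paper does not use characteristics at all: it multiplies the equation by $pf^{p-1}$, integrates by parts to obtain $\frac{d}{dt}\int_{\mathbb{R}^{2d}} f^p\,dx\,dv=-(p-1)\int_{\mathbb{R}^{2d}} f^p\,\nabla_v\cdot H_{[f]}\,dx\,dv$, inserts the explicit formula for $\nabla_v\cdot H_{[f]}$ (only the alignment term contributes, exactly as you observe), deduces monotonicity of $\int f^p$ from the sign hypotheses, and then lets $p\to\infty$, using $\|f_0\|_{L^p}\le\|f_0\|_{L^\infty}^{1-1/p}\|f_0\|_{L^1}^{1/p}$, to get an $L^\infty$ bound with a constant independent of $t$. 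Measured against the statement as written, your proposal has a genuine gap that you yourself half-identify: since $\nabla_v\cdot H_{[f]}\le 0$, the exponent in your representation formula is nonnegative, so the sign hypotheses only give $f(t,X(t),V(t))\ge f_0(x_0,v_0)$, and your upper bound rests entirely on the quantitative estimate $|\nabla_v\cdot H_{[f]}|\le\|\Phi\|_{L^\infty}\|\nabla_v\cdot G\|_{L^\infty}\|f_0\|_{L^1}$. Boundedness of $\Phi$ and of $\nabla_v\cdot G$ is not among the hypotheses of the lemma or the paper's standing assumptions ($\Phi$ is only locally Lipschitz with a positive lower bound, $G$ only locally Lipschitz), so you are importing extra hypotheses; and even granting them, your constant is $e^{C_0t}$, which grows in time, whereas a lemma placed in the section on time-asymptotic behavior is meant to deliver a bound uniform in $t$, which is precisely what the paper's $p\to\infty$ argument is designed to produce.

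That said, your sign bookkeeping is correct and exposes a real issue with the paper's own computation. From $\nabla_v\cdot H_{[f]}\le0$ and the identity $\frac{d}{dt}\int f^p=-(p-1)\int f^p\,\nabla_v\cdot H_{[f]}\,dx\,dv$, the right-hand side is \emph{nonnegative}, i.e.\ the $L^p$ norms increase; the paper's displayed derivation loses a sign between its second and third lines. The mechanism you describe --- contraction of phase-space volume in $v$, hence growth of $f$ along characteristics, consistent with the concentration $f\to\rho(t,x)\delta(v-\mathcal V_1(0))$ proved elsewhere in the paper --- indicates that a time-uniform upper bound on $\|f\|_{L^\infty}$ cannot follow from the stated sign hypotheses alone. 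So your argument does not prove the lemma as stated, but your analysis correctly isolates where the difficulty lies, and any repair will have to add hypotheses of the kind you invoke (or weaken the conclusion to a lower bound or to a time-dependent constant).
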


\begin{proof}
From equation \eqref{kineticequs}, $1<p<\infty$, we have
\begin{align*}
&\frac{d}{dt}\int_{\mathbb{R}^{2d}}f^{p-1}f dxdv\\
=&-p\int_{\mathbb{R}^{2d}}f^{p-1}\Big[v\cdot\nabla_x f+\nabla_v\cdot H_{[f]}f\Big]dxdv\\
=&\int_{\mathbb{R}^{2d}}\Big[v\cdot\nabla_x f^p+\nabla_v\cdot (H_{[f]}f^p)+(p-1)(\nabla_v\cdot H_{[f]})f^p\Big]dxdv\\
=&-(p-1)\int_{\mathbb{R}^{4d}}\Phi(|x-y|)\nabla_v\cdot G(v-v_{*})f^p(x,v)f(y,v_{*})dydv_{*}dxdv,
\end{align*}
Since $ \nabla_v\!\cdot \!G(v)\geq0$, $f(t,x,v)\geq 0$, we have
$$
\frac{d}{dt}\int_{\mathbb{R}^{2d}}f^p(t,x,v)dxdv\leq 0.
$$
Or equivalently,
$$
\int_{\mathbb{R}^{2d}}f^p(t,x,v)dxdv\leq\int_{\mathbb{R}^{2d}}f^p_0(x,v)dxdv.
$$
Then interpolation implies
$$
\big\|f(t,x,v)\big\|_{{L^\infty}\big({\mathbb{R}^d\times\mathbb{R}^d}\big)}\leq C\|f_0\|_{L^\infty}.
$$

\end{proof}

\section*{Appendix}
In this appendix, for the convenience of the readers, we list here main tools about optimal transport which will be useful for the proving of well-posedness theory. Consider the space of probability measure $\mathcal{P}_1(\mathbb{R}^{d}),$
\begin{eqnarray*}
\mathcal{P}_1(\mathbb{R}^{d})&=&\Big\{f|~f \mbox{ is a probability measure on } \mathbb{R}^{d} \mbox{ and } \int_{\mathbb{R}^{d}}|x|f(dx)
<+\infty\Big\}.
\end{eqnarray*}
In $\mathcal{P}_1(\mathbb{R}^{d})$, the Monge-Kantorovich-Rubinstein distance is defined by
$$
\mathcal {W}_1(f,~g)=\inf_{\pi\in\Lambda(f,~g)}\Big\{\int_{\mathbb{R}^{d}\times\mathbb{R}^{d}}|x-y|d\pi(x,y)\Big\}
$$
where $\Lambda(f,~g)$ is the set of transference plans between the measures $f $ and $g$ which have marginals $f $ and $g$, the set is always nonempty.

Another useful tool is the bounded Lipschitz distance. We will give the definition in following.
First, define the admissible set $\mho$ of test functions:
\begin{eqnarray}\label{mho}
\mho:=\big\{\varphi:\mathbb{R}^{2d}\rightarrow\mathbb{R}, \|\varphi\|_{L^\infty}\leq1, \mbox{ Lip }(\varphi):=\displaystyle\sup_{x_1\neq x_2 \in\mathbb{R}^{2d}}\frac{|\varphi(x_1)-\varphi(x_2)|}{|x_1-x_2|}\leq1\big\}.
\end{eqnarray}
Let $\mu$, $\nu$ $\in\mathbb{M}$ be two Radon measures. Then the bounded Lipschitz distance $d(\mu$, $\nu)$ is given by
$$
d(\mu,\nu):=\displaystyle\sup_{\varphi\in\Omega}\big|\int_{\mathbb{R}^{2d}}\varphi(x,v)d\mu -\int_{\mathbb{R}^{2d}}\varphi(x,v)d\nu\big|.
$$
The bounded Lipschitz distance $d$ is equivalent to the Monge- Kantorovich- Rubinstein distance $\mathcal {W}_1$ $($see\cite {SJ}$)$.

\section*{Acknowledgments} We would like to thank Jian-Guo Liu for introducing us into this field and with many helpful discussions.


\end{document}